\newtheorem{Proposition}{Proposition}
\newtheorem{Lemma}{Lemma}
\newtheorem{Corollary}{Corollary}
\newtheorem{lemma}[Lemma]{$\mathbf{Lemma}$}
\newtheorem{proposition}[Proposition]{Proposition}
\newtheorem{corollary}[Corollary]{$\mathbf{Corollary}$}
\begin{document}
\title{ Impact of Non-orthogonal Multiple Access on the Offloading of Mobile Edge Computing}

\author{ Zhiguo Ding, \IEEEmembership{Senior Member, IEEE}, Pingzhi Fan,   \IEEEmembership{Fellow, IEEE},   and H. Vincent Poor, \IEEEmembership{Fellow, IEEE}  \thanks{  

    Z. Ding and H. V. Poor are  with the Department of
Electrical Engineering, Princeton University, Princeton, NJ 08544,
USA. Z. Ding
 is also  with the School of
Electrical and Electronic Engineering, the University of Manchester, UK (email: \href{mailto:zhiguo.ding@manchester.ac.uk}{zhiguo.ding@manchester.ac.uk}, \href{mailto:poor@princeton.edu}{poor@princeton.edu}).
 
P. Fan is with the Institute of Mobile
Communications, Southwest Jiaotong University, Chengdu, China (email: \href{mailto:pzfan@swjtu.edu.cn}{pzfan@swjtu.edu.cn}).

  }\vspace{-2em}} \maketitle

\begin{abstract} 
This paper considers the coexistence of two important communication techniques, non-orthogonal multiple access (NOMA) and mobile edge computing (MEC). Both NOMA uplink and downlink transmissions are applied to MEC, and analytical results are developed to demonstrate that the use of NOMA can efficiently reduce  the latency and energy consumption of MEC offloading. In addition, various asymptotic studies are   carried out to reveal the impact of the users' channel conditions and transmit powers on the   application  of NOMA to MEC is quite different to those in conventional NOMA scenarios. Computer simulation results are also provided to facilitate the performance evaluation of NOMA-MEC and also verify the accuracy of the developed analytical results. \vspace{-1em}
\end{abstract} 

\section{Introduction}
Non-orthogonal multiple access (NOMA) has been viewed as a key enabling technology in next-generation wireless networks due to its superior spectral efficiency \cite{jsacnomaxmine}. On the one hand,  the principle of NOMA brings fundamental changes to the design of future multiple access techniques \cite{6666156,lpma}. In particular, compared to conventional orthogonal multiple access (OMA) which allocates orthogonal bandwidth resource blocks to users,  NOMA  encourages  the users   to share the same spectrum, where multiple access interference is handled by applying advanced transceiver designs, such as superposition coding and successive interference cancellation (SIC). Hence compared to OMA, NOMA offers better flexibility     for efficiently utilizing the scarce bandwidth resources. 

On the other hand, the principle of NOMA has also been shown  important to the evolution  of many other types of communication techniques. For example, the spectral efficiency of multiple-input multiple-output (MIMO) systems can be significantly improved by designing sophisticated MIMO-NOMA transmission schemes and       harvesting  the spatial degrees of freedom in a more efficient way compared to MIMO-OMA \cite{Zhiguo_mimoconoma, 7582424,7095538}. Another example is the application of   NOMA to millimeter-wave (mmWave) communication systems, and the existing studies show that the directional transmission feature of mmWave propagation is ideal for the application of NOMA, where users with strongly correlated channels are grouped together for the implementation of NOMA \cite{Zhiguo_mmwave}. Wireless caching is one of the latest  examples for the applications of NOMA to other communication techniques, where NOMA assisted content pushing and delivery schemes have  been developed  to improve the cache hit probability and ensure that the files stored in the local caches are frequently updated during on-peak hours \cite{Dingicc18}. 

This paper is to focus on the coexistence of   NOMA and mobile-edge computing (MEC) which is another important communication technique in future wireless networks \cite{6069707,6553297}. The use of MEC is motivated by the fact that emerging mobile applications, such virtual reality, augmented reality, and interactive gaming, make mobile networks computationally constrained. Take virtual reality as an example. Mobile nodes need to carry out object recognition, pose estimation, vision-based tracking, etc. Furthermore, for virtual reality assisted  gaming, the mobile devices are also expected to facilitate mixed reality and human computer interaction \cite{7946930}. However, most mobile devices are computation and power limited, which means that relying on the mobile devices to locally complete the computationally intensive tasks will result in two disadvantages. One is that the batteries of the devices can be drained quickly, and the other is that the devices might not be able to complete the tasks  before their deadlines.  The key idea of MEC is to employ more resourceful computing facilities at the edge of mobile networks, such as access points  and small-cell  base stations integrated with MEC servers, and ask the mobile users to offload their computationally intensive tasks to the MEC facilities. In order to improve the energy-latency tradeoff of MEC, a dynamic computation offloading scheme was proposed  by assuming that  mobile devices can carry out energy harvesting by using renewable energy sources \cite{7572018}. In \cite{8234686} and \cite{8264794} a similar MEC scenario was considered, where   simultaneous wireless information and power transfer was applied to MEC to facilitate the user cooperation.  In \cite{8279411}, a user scheduling scheme was proposed to MEC in order to achieve a balanced tradeoff between the latency and reliability for task offloading.  In \cite{MEC01},  a more challenging multi-user MEC scenario was considered, where the users offload their tasks to the MEC server in an asynchronous manner.  

Initial studies in \cite{8269088} and \cite{8267072} have already demonstrated the benefit for the application of NOMA to   MEC, by developing various optimization frameworks. However, there is still lack of theoretic performance analysis for a better understanding of the impact of NOMA on  MEC, which is the motivation of this paper. The contributions of this paper are listed as follows:
\begin{itemize}
\item The application of NOMA uplink transmission to MEC is considered, where the impact of NOMA on the latency of MEC is foused first. When there are multiple users and a single MEC server, the use of NOMA can ensure that multiple    users   complete their offloading at the same time, which   effectively reduces the offloading latency. The probability for a strong user to complete its offloading by   using the time which would be solely occupied by a weak user in the OMA mode is  characterized first and then used to identify the impact of the users' channel conditions and transmit powers on the offloading latency.  The carried out asymptotic studies reveal that,  in the low signal-to-noise ratio (SNR) regime, it is almost sure that the use of NOMA can guarantee a superior latency performance, i.e.,   the strong user does not need extra time, but just uses the time allocated to the weak user for offloading. However, this conclusion is not valid in the high SNR regime, as shown by the carried out asymptotic studies.  These observations are quite different from conventional NOMA scenarios, where the benefit  of using NOMA is more obvious in the high SNR regime. 

\item From the energy perspective, NOMA-MEC is not energy efficient, if the strong user is forced to complete its offloading by only using     the time which would be solely occupied    by  the weak user in the OMA mode. A more energy efficient offloading approach is to ask the strong user to first offload  parts of its task while the weak user is offloading, and then offload its remaining data to the server by using a dedicated time slot. Our developed analytical results show that the modified NOMA assisted MEC protocol can offload more data than OMA, while using less energy. This conclusion is surprising since it is commonly believed that more energy is needed for many NOMA transmission schemes compared to their OMA counterparts,   in order to combat strong multiple access interference.  

\item The application of NOMA downlink transmission to MEC is also considered, where  a user  uses NOMA to offload its multiple   tasks to multiple MEC servers simultaneously.  For NOMA uplink transmission, admitting an additional node into the system will not bring any performance degradation to the existing nodes, as long as the newcomer's   signal is decoded correctly  at the first step of SIC. However, this is not valid to  NOMA downlink transmission,    which motivates the use of the cognitive radio inspired power allocation policy. The analytical results are developed to demonstrate that NOMA-MEC with cognitive radio power allocation  can simultaneously  reduce the energy consumption for offloading and also increase the amount of data offloaded to the servers, particularly in the high SNR regime. In addition, the carried out asymptotic studies  show that, for the application of NOMA downlink transmission  to MEC, it is important to group  servers   with strong channel conditions, in order to realize  the performance gain of NOMA-MEC over OMA-MEC, whereas, for the application of NOMA uplink transmission to MEC,  it is preferable  to schedule users with diverse channel conditions, i.e., a user with poor channel conditions  is paired with a   user with strong channel conditions.  
\end{itemize}

\section{System Model} 
Consider a general MEC communication scenario with $M$ users and $K$ access points with   integrated MEC servers. All the nodes are assumed to have a single antenna, and operate in the half duplex mode. Each user needs to complete  computationally intensive latency-critical tasks.  Because of the users' limited computation capabilities,   carrying out those   tasks locally can consume a significant amount of time and energy, which is the motivation for the use of MEC. In order to clearly illustrate the impact of NOMA on MEC, the following assumption is made in this paper:

{\bf Assumption 1:} {\it The users always prefer  to offload  their tasks to the MEC servers.}

With this assumption, the cost of using  OMA-MEC  for offloading will be compared to that of NOMA-MEC in this paper,  so   the performance gain of NOMA over OMA can be clearly demonstrated.   Assume that each user has $L$ tasks, where each task is inseparable and  task $l$ belonging to user $m$ containts  $N_{m,l}$ bits. 

Typically, MEC consists of  two following phases. The first phase is the offloading phase, where a user transmits its tasks to one or more than one MEC server. The second phase is the feedback phase, where the MEC servers carry out the offloaded tasks and feed the outcomes of these computations back to the users. In this paper, the impact of NOMA on the first phase of MEC is focused, and the following assumption is used:

{\bf Assumption 2:} {\it The costs for the second phase of MEC are omitted in the paper.} 

Note that in the literature of MEC, this assumption has been commonly used due to the following two reasons \cite{7572018, 8234686, 8264794}. Firstly,  the delay caused by  the second phase of MEC, i.e., the time for a server to compute an offloaded task and the time for a user to download the computation results from a server, is negligible, because of the superior computation capabilities   of the servers as well as  the small sizes of the computation results.  Secondly, the energy for an MEC server to compute the offloaded tasks as well as the  transmission energy consumption during the second phase of MEC can   also be omitted, since the MEC servers are not energy constrained. 

The performance of MEC can be evaluated from the latency  and energy perspectives:
\begin{itemize}

\item {\it Latency of MEC:}
Denote the data rate for user $i$  to offload task $l$ by $R_{i,l}$ which is a function of the used transmit power, denoted by $P^{ow}_{i,l}$. The   time required for offloading task $l $ of user $i$ is given by
\begin{align}
 {T}_{i,l}= \frac{ N_{i,l}}{R_{i,l}} .
\end{align} 
Due to Assumption 1, all the tasks will be offloaded, and hence  there is no delay cost for   local computing. 

\item {\it Energy Consumption of MEC:}
Recall that the offloading transmit power is $P^{ow}_{i,l}$, which is determined by $R_{i,l}$. Therefore, the total energy consumed by offloading all the $L$ tasks of user $i$ is given by
\begin{align}
E_{i} = \sum^{L}_{l=1}P^{ow}_{i,l} \frac{ N_{i,l}}{R_{i,l}},
\end{align}
where the use of Assumption 1 means that there is no energy cost for local computing, and the energy consumption during the second phase of MEC is omitted  due to Assumption~2. 
\end{itemize}

\section{Application of NOMA Uplink Transmission to MEC}\label{section uplink}
This section is to focus on one particular type of MEC scenarios, where  $M$ users offload their tasks to a single MEC server ($K=1$) and each user has a single task for offloading ($L=1$). 
  Offloading in  this MEC scenario can be viewed as a special case of uplink transmission, to which both OMA and NOMA can be applied. Depending on the user's quality of service (QoS) requirements, different   MEC offloading strategies can be applied, as described in the following two subsections.

Without loss of generality, assume that the users are ordered as follows:
\begin{align}
|h_1|^2\leq  \cdots \leq |h_M|^M,
\end{align}
where $h_m$ denotes the channel gain between user $m$ and the MEC server. In this paper, the users' channels are assumed to be quasi-static Rayleigh fading. In order to avoid overloading the MEC server at a single bandwidth resource block, such as a time slot or a frequency channel, we assume that only two users, user $m$ and user $n$,  are scheduled to be served by the MEC server at the same resource block  where $m<n$.

\subsection{Impact of NOMA on Offloading Latency}
If the users' tasks are delay sensitive, i.e., using less offloading time has higher priority than energy consumption,  OMA-MEC and NOMA-MEC can be implemented as follows.  

In OMA-MEC,  the users are allocated with dedicated time slots  for offloading their tasks to the MEC server individually, i.e.,  each user needs the following time interval  for delivering its task  to the server\footnote{For notational simplicity, subscript $l$ is omitted since each user has a single task for offloading ($L=1$).}:
\begin{align}
{T}_i \triangleq   \frac{N}{\log\left(1+\frac{P^{ow}_{i}}{P_N}|h_i|^2\right)},
\end{align}
for $i\in\{m,n\}$, where   $P_N$ denotes   the receive noise.     To facilitate performance analysis, we   assume that the users' tasks have the same size, i.e., $N=N_{i,1}$, for $i\in \{m, n\}$.  

In NOMA-MEC, user $n$ is admitted to time slot $T_m$ which would be solely occupied by  user $m$ in the OMA mode, and user $n$ is asked to  finish its offloading within  $T_m$. Compared to OMA-MEC, the advantage  of  NOMA-MEC   is that user $n$ does not need extra time for offloading, and hence the offloading latency is reduced.   It is important to point out that admitting  user $n$ to time slot $T_m$ does not cause any performance degradation to user $m$, if the user $n$'s signals  are decoded before  user $m$'s at the MEC server and  also user $n$ uses the following rate constraint:~\cite{Cover1991} 
\begin{align}
R_n\leq \log\left(1+\frac{P^{ow}_{n}|h_n|^2}{P^{ow}_{m}|h_m|^2+P_{N}}\right).
\end{align}

The following lemma provides  the closed-form expression for the probability   $\mathrm{P}_n = \mathrm{P}\left( R_nT_m\geq N \right)$, which measures the likelihood of the event that   user $n$ can complete its offloading within $T_m$, for given $P^{ow}_{n}$ and $P^{ow}_{m}$.

\begin{lemma}\label{lemmapn}
 For given $P^{ow}_{n}$ and $P^{ow}_{m}$,    the probability for user $n$ to complete offloading by using the time slot which would be solely occupied by  user $m$ in the OMA mode is given by
\begin{align}\label{eqlemma1}
\mathrm{P}_n   
=& 
c_{mn} \sum^{n-1-m}_{p=0} \frac{c_p}{M-m-p}\sum^{m-1}_{l=0} c_l e^{\frac{b^2}{4a}}  \\\nonumber &\times \frac{\sqrt{\pi}}{2\sqrt{a}}\left(1 - \Phi\left(\frac{\max\{0,\rho_n-\rho_m\}}{\rho^2_m}+\frac{b\sqrt{a}}{2a}\right)\right)+1\\\nonumber &-\frac{M!}{(m-1)!(M-m)!} \sum^{m-1}_{l=0}c_l\frac{e^{-(M-m+l+1)\frac{\max\{0,\rho_n-\rho_m\}}{\rho^2_m} }}{M-m+l+1} ,
\end{align}
where $\rho_i=\frac{P_i^{ow}}{P_N}$, $i\in\{m,n\}$, $c_{mn}=\frac{M!}{(m-1)!(n-1-m)!(M-n)!}$, $c_p={n-1-m \choose p} (-1)^{n-1-m-p}$, $c_l={m-1\choose l} (-1)^l$,  $a=\frac{\rho^2_m}{\rho_n}(M-m-p)$, $b=p+l+1+(M-m-p)\frac{\rho_m}{\rho_n}$, and $\Phi(\cdot)$ denotes the probability integral. 
\end{lemma}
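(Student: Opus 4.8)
The plan is to reduce the latency event to a single inequality between two ordered channel gains, integrate against the joint order-statistic density, and split the resulting two-dimensional integral into the three terms of \eqref{eqlemma1}. I would first simplify $\{R_nT_m\geq N\}$: inserting $T_m=\frac{N}{\log(1+\rho_m|h_m|^2)}$ and $R_n=\log\left(1+\frac{\rho_n|h_n|^2}{\rho_m|h_m|^2+1}\right)$, the inequality $R_nT_m\geq N$ becomes $\log\left(1+\frac{\rho_n|h_n|^2}{\rho_m|h_m|^2+1}\right)\geq\log(1+\rho_m|h_m|^2)$, which after clearing denominators collapses to the single threshold
\begin{align}
|h_n|^2\geq g(|h_m|^2),\qquad g(x)\triangleq\frac{\rho_m x(\rho_m x+1)}{\rho_n}.
\end{align}
Thus $\mathrm{P}_n=\mathrm{P}(|h_n|^2\geq g(|h_m|^2))$ depends only on the pair $(|h_m|^2,|h_n|^2)$.

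Because the $|h_i|^2$ are i.i.d.\ (unit-mean exponential) and ordered, $(|h_m|^2,|h_n|^2)$ is the pair of $m$-th and $n$-th order statistics, with joint density $c_{mn}(1-e^{-x})^{m-1}(e^{-x}-e^{-y})^{n-1-m}e^{-(M-n)y-x-y}$ on $\{0\leq x\leq y\}$. Setting $x=|h_m|^2$, $y=|h_n|^2$, I would write $\mathrm{P}_n=\int_0^\infty\int_{\max\{x,g(x)\}}^\infty(\cdot)\,dy\,dx$, the lower limit enforcing both $y\geq x$ and $y\geq g(x)$. Since $g(x)\geq x$ exactly when $x\geq\frac{\max\{0,\rho_n-\rho_m\}}{\rho_m^2}\triangleq\tau$, the outer integral splits at $\tau$: the lower limit is $x$ on $[0,\tau)$ and $g(x)$ on $[\tau,\infty)$. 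Expanding $(1-e^{-x})^{m-1}=\sum_l c_l e^{-lx}$ and $(e^{-x}-e^{-y})^{n-1-m}=\sum_p c_p e^{-px-(n-1-m-p)y}$ and carrying out the inner $y$-integral reduces every exponent to the coefficients $M-m-p$ in $y$ and $l+p+1$ in $x$.

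On $[\tau,\infty)$ the inner integral yields $\frac{e^{-(M-m-p)g(x)}}{M-m-p}$, and the surviving $x$-integral is $\int_\tau^\infty e^{-ax^2-bx}\,dx$ with $a=\frac{\rho_m^2}{\rho_n}(M-m-p)$ and $b=p+l+1+(M-m-p)\frac{\rho_m}{\rho_n}$ as defined; completing the square gives $e^{b^2/(4a)}\int_\tau^\infty e^{-a(x+b/(2a))^2}\,dx$, and the substitution $u=\sqrt a\,(x+b/(2a))$ turns this into a Gaussian tail $\frac{\sqrt\pi}{2\sqrt a}(1-\Phi(\cdot))$, producing the first double sum of \eqref{eqlemma1}. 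On $[0,\tau)$ the inner integral instead gives $\frac{e^{-(M-m-p)x}}{M-m-p}$; combined with the $e^{-(l+p+1)x}$ prefactor and integrated in $x$, it leaves $\frac{1-e^{-(M-m+l+1)\tau}}{(M-m-p)(M-m+l+1)}$.

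The main obstacle is showing that this $[0,\tau)$ contribution collapses to exactly $1-\frac{M!}{(m-1)!(M-m)!}\sum_l c_l\frac{e^{-(M-m+l+1)\tau}}{M-m+l+1}$, i.e.\ that the double sum over $l,p$ telescopes. This rests on two Beta-function identities: $\sum_p\binom{n-1-m}{p}\frac{(-1)^{n-1-m-p}}{M-m-p}=B(M-n+1,n-m)$, which gives $c_{mn}\sum_p\frac{c_p}{M-m-p}=\frac{M!}{(m-1)!(M-m)!}$ and thereby converts the $e^{-(M-m+l+1)\tau}$ part into the stated last term; and $\sum_l\binom{m-1}{l}\frac{(-1)^l}{M-m+l+1}=B(M-m+1,m)$, which forces $\frac{M!}{(m-1)!(M-m)!}\sum_l\frac{c_l}{M-m+l+1}=1$ and so reduces the constant part to $1$. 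A useful consistency check is the regime $\rho_n\leq\rho_m$: then $\tau=0$, the $[0,\tau)$ piece vanishes, and the same second identity makes $1-\frac{M!}{(m-1)!(M-m)!}\sum_l\frac{c_l}{M-m+l+1}=0$, so no stray constant survives.
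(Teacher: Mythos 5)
Your proof is correct and takes essentially the same route as the paper's Appendix A: reduce the offloading event to $|h_n|^2 \geq \tfrac{\rho_m}{\rho_n}|h_m|^2+\tfrac{\rho_m^2}{\rho_n}|h_m|^4$, integrate the order-statistic joint pdf, split at $\tau=\max\{0,\rho_n-\rho_m\}/\rho_m^2$, and evaluate the upper piece by binomial expansion and completing the square into the probability integral, exactly as the paper does for its terms $T_1$ and $T_2$. The only presentational differences are that you collapse the $[0,\tau)$ contribution via two Beta-function identities (the second of which is precisely the paper's Proposition 1), where the paper reads the same term off the marginal law of $|h_m|^2$, and that you handle both sign cases of $\rho_n-\rho_m$ uniformly through the $\max\{0,\cdot\}$, whereas the paper explicitly treats only $\rho_n\geq\rho_m$.
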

\begin{proof}
Please refer to Appendix A. 
\end{proof}

In order to carry out asymptotic studies, we first present the following proposition which will be used for the development of the high and low SNR approximations for $\mathrm{P}_n $. 
\begin{proposition}\label{lemma2}
For $m<M$, the following equality holds
\begin{align}
\frac{M!}{(m-1)!(M-m)!} \sum^{m-1}_{l=0}c_l \frac{
1 }{ (M-m+l+1)}=1. 
 \end{align}
\end{proposition}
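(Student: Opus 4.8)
The plan is to recognize the left-hand side as a scaled Beta integral. For brevity, let $S \triangleq \sum_{l=0}^{m-1} c_l \frac{1}{M-m+l+1}$ with $c_l = \binom{m-1}{l}(-1)^l$, so that the claim is $\frac{M!}{(m-1)!(M-m)!}\,S = 1$. The first step I would take is to remove the awkward denominator $\frac{1}{M-m+l+1}$ using the elementary integral representation $\frac{1}{M-m+l+1} = \int_0^1 x^{M-m+l}\,dx$, which is valid because $M-m+l+1 \geq 1 > 0$ for every term (the hypothesis $m<M$ guarantees $M-m \geq 1$, but even $m=M$ would leave the exponent nonnegative).

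Next I would interchange the finite sum and the integral — legitimate since it is a finite sum of continuous functions — and factor out $x^{M-m}$, so that the inner alternating sum collapses by the binomial theorem:
\begin{align}
S = \int_0^1 x^{M-m} \sum_{l=0}^{m-1}\binom{m-1}{l}(-1)^l x^l \, dx = \int_0^1 x^{M-m}(1-x)^{m-1}\,dx.
\end{align}
This is precisely the Beta integral $B(M-m+1,m) = \frac{\Gamma(M-m+1)\Gamma(m)}{\Gamma(M+1)} = \frac{(M-m)!\,(m-1)!}{M!}$.

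The final step is simply to multiply by the prefactor: $\frac{M!}{(m-1)!(M-m)!}\cdot \frac{(M-m)!\,(m-1)!}{M!} = 1$, which is the asserted identity.

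Honestly I do not expect a serious obstacle here; the only genuine ``idea'' is the integral representation of $\frac{1}{M-m+l+1}$, after which every step is forced. If one preferred to avoid the Beta function entirely, an equivalent route would be to invoke the classical alternating-binomial evaluation $\sum_{l=0}^{k}(-1)^l\binom{k}{l}\frac{1}{a+l} = \frac{k!}{a(a+1)\cdots(a+k)}$ with $a = M-m+1$ and $k = m-1$, which yields the same telescoped product $\frac{(m-1)!(M-m)!}{M!}$; that identity can itself be established by induction on $k$ or by repeated finite differencing, but the Beta-integral argument is the shortest and most transparent.
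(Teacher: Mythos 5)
Your proof is correct, and it takes a genuinely different route from the paper. The paper does not use an integral representation at all: it rewrites the sum as $\frac{M!}{(m-1)!(M-m)!(M-m+1)}\sum_{l=0}^{m-1}c_l\frac{1}{1+\frac{l}{M-m+1}}$ and then invokes, by citation to an external reference, the binomial identity
\begin{align}
\sum^{t}_{i=0}(-1)^{i}{t \choose i}\frac{1}{1+ix}=\frac{t!\,x^{t}}{\prod^{t}_{i=1}(1+ix)},
\end{align}
with $x=\frac{1}{M-m+1}$ and $t=m-1$, after which the product telescopes against the prefactor. That cited identity is, after the substitution $x=1/a$, exactly the classical alternating-binomial evaluation $\sum_{l=0}^{k}(-1)^{l}\binom{k}{l}\frac{1}{a+l}=\frac{k!}{a(a+1)\cdots(a+k)}$ that you mention as your alternative route --- so your ``fallback'' is essentially the paper's actual argument, while your primary Beta-integral argument is something the paper never does. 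What your approach buys is self-containedness: writing $\frac{1}{M-m+l+1}=\int_0^1 x^{M-m+l}\,dx$, collapsing the sum by the binomial theorem, and evaluating $\int_0^1 x^{M-m}(1-x)^{m-1}\,dx=\frac{(M-m)!\,(m-1)!}{M!}$ proves the needed fact (and, in effect, the identity the paper cites) from scratch in three forced steps. What the paper's route buys is that it stays entirely within finite binomial algebra, at the cost of leaning on an external reference; incidentally, your argument also makes transparent the side remark you note, that the hypothesis $m<M$ is not even needed for the identity to hold, since the integral representation only requires the exponents to be nonnegative.
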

\begin{proof}
Please refer to Appendix B.
\end{proof}

By using Lemma \ref{lemmapn} and Proposition \ref{lemma2}, the high and low SNR approximations for $\mathrm{P}_n   $ can be obtained   in the following lemmas.
\begin{lemma} \label{lemmahigh}
  When both $P_n^{ow}$ and $P_m^{ow}$ approach infinity and $\eta\triangleq \frac{P_n^{ow}}{P_m^{ow}}$ is a constant,   the probability for user $n$ to complete offloading within $T_m$ can be approximated as follows: 
\begin{align}
\mathrm{P}_n   
\approx & 
 \frac{1}{\rho_m^{\frac{m}{2}}}\sum^{n-1-m}_{p=0} \frac{\eta^{\frac{m}{2}}c_{mn}c_p}{(M-m-p)^{\frac{m}{2}+1}}    \left(\tilde{Q}_1- \tilde{Q}_2\right) .
\end{align}
The two parameters, $\tilde{Q}_1$ and $\tilde{Q}_2$, are given by
\begin{eqnarray}
\tilde{Q}_1 \approx  \left\{\begin{array}{ll} \frac{\sqrt{\pi}(-1)^{m-1}(m-1)!,}{\left(\frac{m-1}{2}\right)! 2^{m}} , &\text{if $m$ is an odd number}\\  \frac{\sqrt{\pi}\mu_m}{\left(\frac{m}{2}\right)! 2^{m+1}a^{\frac{1}{2}}}  , & \text{if $m$ is an even number}
  \end{array} \right.,
\end{eqnarray}
and 
\begin{eqnarray}
\tilde{Q}_2 \approx   \left\{\begin{array}{ll}  \frac{ \mu_m}{m!!2^{\frac{m+1}{2}}a^{\frac{1}{2}}}, &\text{if $m$ is an odd number}\\    \frac{ (-1)^{m-1}(m-1)! }{(m-1)!!2^{\frac{m}{2}}} , & \text{if $m$ is an even number}
  \end{array} \right.,
\end{eqnarray}
where $\lambda=\left[p+1+\frac{(M-m-p)}{\eta}\right]$ and $\mu_m=\left(   \sum^{m-1}_{l=0} c_l l^m +  m! \lambda  (-1)^{m-1} 
\right)$. 
\end{lemma}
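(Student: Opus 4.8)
The plan is to start from the exact expression for $\mathrm{P}_n$ in Lemma \ref{lemmapn} and to extract its leading and next-to-leading order behaviour as $\rho_m\to\infty$ with $\eta=\rho_n/\rho_m$ held fixed. First I would record the scalings of the three quantities carrying the SNR dependence: since $P_N$ is fixed, one has $a=\frac{\rho_m^2}{\rho_n}(M-m-p)=\frac{\rho_m}{\eta}(M-m-p)\to\infty$, while $b=\lambda+l$ with $\lambda=p+1+\frac{M-m-p}{\eta}$ is a constant, and $x\triangleq\frac{\max\{0,\rho_n-\rho_m\}}{\rho_m^2}=O(\rho_m^{-1})$. Writing $\epsilon\triangleq a^{-1/2}=O(\rho_m^{-1/2})$, the target is to show that $\mathrm{P}_n$ is $O(\epsilon^m)=O(\rho_m^{-m/2})$ and to compute its coefficient.

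Next I would dispose of the last two lines of \eqref{eqlemma1}, namely $1-\frac{M!}{(m-1)!(M-m)!}\sum_{l=0}^{m-1}c_l\frac{e^{-\alpha_l x}}{\alpha_l}$ with $\alpha_l=M-m+l+1$. Expanding $e^{-\alpha_l x}$ and using Proposition \ref{lemma2} to cancel the constant term, this part equals $\frac{M!}{(m-1)!(M-m)!}\sum_{k\ge1}\frac{(-1)^{k-1}x^k}{k!}\sum_{l}c_l\alpha_l^{k-1}$. Since $\alpha_l^{k-1}$ is a polynomial of degree $k-1$ in $l$, the finite-difference/Stirling identity $\sum_{l=0}^{m-1}c_l l^{j}=0$ for $j<m-1$ forces every term with $k<m$ to vanish, so this part is $O(x^m)=O(\rho_m^{-m})$, negligible relative to the claimed $\rho_m^{-m/2}$ and hence droppable.

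The core of the argument is then the first sum. Fixing $p$, I would expand the $l$-summand $f(b)=e^{b^2/(4a)}\big(1-\Phi(x+\tfrac{b}{2\sqrt a})\big)$ as a joint series in $\epsilon$ and $b$, keeping the Gaussian factor $e^{b^2\epsilon^2/4}=\sum_k\frac{b^{2k}\epsilon^{2k}}{4^k k!}$ (which contributes only even powers of $\epsilon$ with matching even degree in $b$) separate from the probability-integral factor, whose expansion $1-\Phi(z)=1-\frac{2}{\sqrt\pi}\sum_{j}\frac{(-1)^j z^{2j+1}}{j!(2j+1)}$ contributes a constant plus odd powers of $\epsilon$ with matching odd degree in $b$ (the $x=O(\epsilon^2)$ shifts only generate strictly higher-order, lower-degree off-diagonal terms). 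The crucial simplification is again $\sum_{l=0}^{m-1}c_l b^{j}=0$ for $j<m-1$, so that only the terms of degree $m-1$ and $m$ in $b$ survive at the two leading orders, for which $\sum_{l}c_l b^{m-1}=(-1)^{m-1}(m-1)!$ and $\sum_{l}c_l b^{m}=\sum_{l}c_l(\lambda+l)^m=\mu_m$; the latter identity is precisely what produces $\mu_m$ in the statement.

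Finally I would collect by the parity of $m$. When $m$ is odd, the degree-$(m-1)$ term has even degree and is supplied by the Gaussian factor alone, giving the $a$-independent leading coefficient $\tilde{Q}_1$, while the degree-$m$ term has odd degree and is supplied by the convolution of the Gaussian and probability-integral expansions, producing the $O(\epsilon)$ correction $\tilde{Q}_2$ carrying $\mu_m$ and the double factorial $m!!$. When $m$ is even the two roles interchange, which is exactly why $\tilde{Q}_1$ and $\tilde{Q}_2$ swap their $a$-dependence in the statement. Multiplying the surviving $\epsilon^m(\tilde{Q}_1-\tilde{Q}_2)$ by the prefactor $\frac{\sqrt\pi}{2\sqrt a}$ and the outer factor $c_{mn}\frac{c_p}{M-m-p}$, using $\epsilon^m=a^{-m/2}=\eta^{m/2}\rho_m^{-m/2}(M-m-p)^{-m/2}$, and summing over $p$, reproduces the claimed expression. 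I expect the main obstacle to be this parity-split bookkeeping: correctly assembling the degree-$m$ convolution coefficient, reconciling the ascending factorials $4^k k!$ with the double factorials appearing in $\tilde{Q}_1,\tilde{Q}_2$, and tracking the signs so that the two orders combine cleanly into $\tilde{Q}_1-\tilde{Q}_2$.
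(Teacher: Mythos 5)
Your proposal is correct and follows essentially the same route as the paper's Appendix C: the same split of $\mathrm{P}_n$ into the first sum ($T_1$) and the remaining terms ($T_2$), the same use of Proposition \ref{lemma2} plus the vanishing-moment identities $\sum_{l}c_l l^j=0$ for $j\le m-2$ and $\sum_{l}c_l l^{m-1}=(-1)^{m-1}(m-1)!$ to show $T_2=O(\rho_m^{-m})$ and to isolate the degree-$(m-1)$ and degree-$m$ coefficients (your $\sum_l c_l(\lambda+l)^m=\mu_m$ is exactly the paper's step), and the same parity split producing the alternating $a$-dependence of $\tilde{Q}_1$ and $\tilde{Q}_2$. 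The only execution difference is that the paper expands $\Phi$ via the representation $\Phi(x)=\frac{2}{\sqrt{\pi}}e^{-x^2}\sum_{k\ge 0}\frac{2^kx^{2k+1}}{(2k+1)!!}$, whose $e^{-x^2}$ factor cancels $e^{b^2/4a}$ exactly and hands over the double factorials directly, thereby avoiding the Gaussian-times-Maclaurin Cauchy product that you correctly identify as your remaining bookkeeping obstacle.
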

\begin{proof}
Please refer to Appendix C. 
\end{proof}

\begin{lemma}\label{lemmalow}
At low SNR, i.e., when both $P_n^{ow}$ and $P_m^{ow}$ approach zeor and $\eta$ is a constant,  the probability for user $n$ to complete offloading within $T_m$ approaches one, i.e., $\mathrm{P}_n   \rightarrow1$.
\end{lemma}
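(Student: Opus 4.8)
The plan is to bypass the unwieldy low-SNR limit of the closed form in Lemma~\ref{lemmapn} and instead argue directly from the event $\{R_nT_m\ge N\}$, reducing it to a single inequality in the two channel gains whose failure set has probability $O(\rho_m)$. First I would simplify the event. Since $T_m=N/\log(1+\rho_m|h_m|^2)$, dividing $R_nT_m\ge N$ by $N$ gives $R_n\ge\log(1+\rho_m|h_m|^2)$; inserting user $n$'s maximal offloading rate $R_n=\log\!\big(1+\rho_n|h_n|^2/(1+\rho_m|h_m|^2)\big)$ and using monotonicity of the logarithm turns this into
\[
\frac{\rho_n|h_n|^2}{1+\rho_m|h_m|^2}\ \ge\ \rho_m|h_m|^2 ,
\]
i.e.\ $\rho_n|h_n|^2\ge\rho_m|h_m|^2\big(1+\rho_m|h_m|^2\big)$. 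Writing $X=|h_m|^2\le Y=|h_n|^2$, substituting $\rho_n=\eta\rho_m$ and cancelling one factor $\rho_m$, the success event becomes $\eta Y\ge X+\rho_m X^2$, so the failure event is $F=\{X\le Y,\ \eta Y< X+\rho_m X^2\}$.

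Next I would bound $\mathrm P(F)$ using the joint density $f_{X,Y}$ of the $m$-th and $n$-th order statistics of $M$ i.i.d.\ unit-mean exponentials, whose normalising constant is precisely $c_{mn}$ and which satisfies $f_{X,Y}(x,y)\le c_{mn}\,e^{-(M-n+1)x}$ on $\{0\le x\le y\}$. Conditioned on $X=x$, the failure set in $Y$ is the interval $\big[x,(x+\rho_m x^2)/\eta\big)$, whose length is $\max\{0,(\rho_m x^2-(\eta-1)x)/\eta\}\le \rho_m x^2/\eta$ as soon as $\eta\ge1$. Bounding the inner $y$-integral by this length times the density and integrating in $x$ yields
\[
\mathrm P(F)\ \le\ \frac{c_{mn}\rho_m}{\eta}\int_0^\infty x^2 e^{-(M-n+1)x}\,dx\ =\ O(\rho_m),
\]
so that $\mathrm P_n=1-\mathrm P(F)\to 1$ as $\rho_m\to0$.

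The step I expect to be the main obstacle is exactly the control of the large-$X$ tail: the quadratic term $\rho_m X^2$ is not uniformly small, and it is only the exponential decay of $f_{X,Y}$ that keeps $\int x^2 e^{-(M-n+1)x}\,dx$ finite and renders the bound $O(\rho_m)$. This tail estimate is also where the hypothesis bites most sharply, since the window-length bound $\rho_m x^2/\eta$---and hence the collapse of the failure set---requires $\eta\ge1$ (equivalently $P^{ow}_n\ge P^{ow}_m$); for $\eta<1$ the window $\big[X,X/\eta\big)$ does not close and $\mathrm P_n$ tends to a constant strictly below one, so the condition $\eta\ge1$ should be carried through. As a cross-check matching the route indicated in the text, the same limit can be recovered from \eqref{eqlemma1}: under $\eta\ge1$ one has $\max\{0,\rho_n-\rho_m\}=\rho_m(\eta-1)$, and letting $a=\rho_m(M-m-p)/\eta\to0$ the last term is evaluated through Proposition~\ref{lemma2} while the $\Phi$-term is resolved by the tail asymptotic $1-\Phi(z)\sim e^{-z^2}/(z\sqrt{\pi})$ of the probability integral.
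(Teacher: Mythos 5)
Your proof is correct, and it takes a genuinely different route from the paper's. The paper argues backwards from the closed form in Lemma~\ref{lemmapn}: it splits into the cases $\rho_n\le\rho_m$ and $\rho_n>\rho_m$, applies the large-argument expansion $1-\Phi(z)\approx e^{-z^2}/(\sqrt{\pi}z)$ of the probability integral (valid because $a=\rho_m(M-m-p)/\eta\to 0$ forces $b/(2\sqrt{a})\to\infty$), and then identifies the surviving sum $\sum_l c_l/b$ with a normalization constant equal to one (first case), or shows $T_1\to 0$ and $T_2\to 1$ (second case). You instead bound the failure event directly: writing the offloading condition as $\eta Y\ge X+\rho_m X^2$ for the ordered gains $X\le Y$, you note that given $X=x$ failure confines $Y$ to a window of length at most $\rho_m x^2/\eta$ once $\eta\ge 1$, and the pointwise bound $f_{X,Y}(x,y)\le c_{mn}e^{-(M-n+1)x}$ then gives $1-\mathrm{P}_n=O(\rho_m)$. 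This is more elementary (no special functions, no series expansions) and buys an explicit convergence rate that the paper never states; what it gives up is any connection to the exact expression \eqref{eqlemma1}, which the paper's route exercises. Your insistence on $\eta\ge 1$ is a genuine sharpening rather than a gap: for $\eta<1$ the success event converges to $\bigl\{|h_n|^2\ge |h_m|^2/\eta\bigr\}$, whose probability is strictly below one, so the lemma as stated is false without that restriction. Indeed, the paper's own treatment of the case $\rho_n\le\rho_m$ contains exactly this slip: in its steps (7)--(8) the constant $b=p+l+1+(M-m-p)/\eta$ is replaced by $M-m+l+1$, which is legitimate only when $\eta=1$; for $\eta<1$ the limiting sum evaluates to $\mathrm{P}\bigl(|h_n|^2\ge|h_m|^2/\eta\bigr)<1$, consistent with your observation. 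So your derivation not only proves the claim on its true range of validity but also exposes an imprecision that the paper's proof glosses over.
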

\begin{proof} 
Please refer to Appendix C. 
 \end{proof}
 Following steps similar to those in the proof for Lemma \ref{lemmalow}, we can have the following corollary.
 \begin{corollary}\label{corollarylow}
When   $P_n^{ow}$ approaches infinity and $P_m^{ow}$ is fixed,  the probability for user $n$ to complete offloading within $T_m$ approaches one, i.e., $\mathrm{P}_n   \rightarrow1$.
\end{corollary}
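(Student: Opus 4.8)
The plan is to bypass the closed-form expression of Lemma~\ref{lemmapn} and argue directly at the level of the defining event, which I expect mirrors the low-SNR argument behind Lemma~\ref{lemmalow}. First I would rewrite the event $\{R_nT_m\geq N\}$ in an SNR-explicit form. Writing $\rho_i=\frac{P^{ow}_i}{P_N}$, substituting $T_m=\frac{N}{\log(1+\rho_m|h_m|^2)}$ together with the largest admissible rate $R_n=\log\!\left(1+\frac{\rho_n|h_n|^2}{\rho_m|h_m|^2+1}\right)$, and using the monotonicity of the logarithm, I obtain the exact equivalence
\begin{align}\label{eqcoreq}
\{R_nT_m\geq N\}=\left\{\rho_n|h_n|^2\geq \rho_m^2|h_m|^4+\rho_m|h_m|^2\right\},
\end{align}
so that $\mathrm{P}_n=\mathrm{P}\!\left(\rho_n|h_n|^2\geq \rho_m^2|h_m|^4+\rho_m|h_m|^2\right)$, where $|h_m|^2\leq|h_n|^2$ are the $m$th and $n$th order statistics of $M$ i.i.d.\ exponential variables.

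Next I would exploit the asymmetry of the present limit, $\rho_n\to\infty$ with $\rho_m$ held fixed. For fixed $\rho_m$ the right-hand side of \eqref{eqcoreq} is a random variable that does not involve $\rho_n$ and is finite almost surely, whereas the left-hand side $\rho_n|h_n|^2$ is nondecreasing in $\rho_n$ and diverges almost surely because $|h_n|^2>0$ with probability one. Hence, for almost every channel realization the indicator of the event in \eqref{eqcoreq} is monotonically nondecreasing in $\rho_n$ and converges to $1$; the monotone (equivalently, dominated) convergence theorem then gives $\mathrm{P}_n\to1$, which is the claim. This is structurally identical to the low-SNR argument for Lemma~\ref{lemmalow}: there the vanishing-SNR scaling trivializes the inequality, whereas here it is the large-$\rho_n$, fixed-$\rho_m$ scaling that does so.

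As a cross-check I would also verify the claim against the closed form of Lemma~\ref{lemmapn}. As $\rho_n\to\infty$ one has $a=\frac{\rho_m^2}{\rho_n}(M-m-p)\to0$, $b\to p+l+1$, and $\frac{\max\{0,\rho_n-\rho_m\}}{\rho_m^2}\to\infty$, so the final exponential sum vanishes and, by Proposition~\ref{lemma2}, the surviving constant term equals $1$. The delicate point on this route---and the reason I prefer the event-based argument---is the leading double sum, which carries the indeterminate product $e^{b^2/(4a)}\bigl(1-\Phi(\cdot)\bigr)$ as $a\to0$; disentangling it requires the large-argument asymptotic of the complementary probability integral, $1-\Phi(z)\sim \frac{e^{-z^2}}{z\sqrt{\pi}}$, and careful tracking of the competing exponentials. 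The main obstacle is therefore entirely a matter of representation: the probabilistic route has essentially no difficulty beyond justifying the limit interchange, while the analytic route must tame the $a\to0$ singularity of the probability-integral term.
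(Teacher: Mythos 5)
Your main argument is correct, and it takes a genuinely different route from the paper's. The paper disposes of this corollary by pointing back to the proof of Lemma \ref{lemmalow}: one works from the decomposition $\mathrm{P}_n=T_1+T_2$ of Appendix A (equivalently, the closed form of Lemma \ref{lemmapn}), observes that with $\rho_m$ fixed and $\rho_n\rightarrow\infty$ the exponentials $e^{-(M-m+l+1)(\rho_n-\rho_m)/\rho_m^2}$ vanish so that $T_2\rightarrow 1$, and then shows $T_1\rightarrow 0$ via the large-argument expansion of the probability integral $\Phi$, balanced against the diverging factor $e^{b^2/(4a)}/\sqrt{a}$ as $a\rightarrow 0$ --- exactly the delicate cancellation you flag in your cross-check. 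Your primary argument never leaves the defining event: your rewriting $\{\rho_n|h_n|^2\geq \rho_m^2|h_m|^4+\rho_m|h_m|^2\}$ is the same one the paper derives at the start of Appendix A, but from there you need only that the event is nondecreasing in $\rho_n$, that its union over $\rho_n$ has probability one (since $|h_n|^2>0$ and $|h_m|^2<\infty$ almost surely), and continuity of probability from below. This is shorter, requires no special-function asymptotics, and is distribution-free (it does not even use the Rayleigh assumption or the order-statistics pdf), whereas the paper's analytic route stays inside the machinery already built for Lemmas \ref{lemmapn}--\ref{lemmalow} and can, in principle, quantify how fast $\mathrm{P}_n\rightarrow 1$, which a soft monotone-convergence argument cannot.

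Two corrections to your side remarks. First, the claim that this is ``structurally identical'' to the low-SNR argument for Lemma \ref{lemmalow} is not accurate: when $\rho_m,\rho_n\rightarrow 0$ with $\eta$ fixed, both sides of the event scale linearly in $\rho_m$, so the inequality does not trivialize --- the event tends to $\{\eta|h_n|^2\geq|h_m|^2\}$, whose probability is one only when $\eta\geq 1$ --- and it is precisely the asymmetric scaling of the present corollary ($\rho_n\rightarrow\infty$, $\rho_m$ fixed) that makes your shortcut work. Second, in your cross-check of Lemma \ref{lemmapn}, Proposition \ref{lemma2} is not actually needed: once the exponential sum vanishes, the constant ``$+1$'' in the closed form survives on its own; Proposition \ref{lemma2} would only enter if one wanted to verify consistency at a finite point such as $\rho_n=\rho_m$. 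Neither remark affects the validity of your event-based proof.
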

{\it Remark 1:} Lemma \ref{lemmahigh} indicates that    $\mathrm{P}_n $ approaches zero at high SNR.  This phenomenon  can be explained   in the following. When $P_m^{ow}$  becomes infinity, user $m$'s rate becomes infinity, and hence $T_m$ approaches zero. On the other hand, the  data rate for user $n$ to transmit during time slot $T_m$ becomes a constant at high SNR, i.e., $\log\left(1+\frac{\rho_n|h_n|^2}{\rho_m|h_m|^2+1}\right)\rightarrow \log\left(1+\frac{\eta |h_n|^2}{ |h_m|^2}\right)$ for $P_n^{ow}\rightarrow \infty$ and $P_m^{ow}\rightarrow \infty$. Therefore, with $T_m\rightarrow 0$ and a constant $R_n$, it will be difficult for     user $n$ to complete offloading within $T_m$. 

{\it Remark 2:} The decay rate of $\mathrm{P}_n $ can be obtained as follows. At high SNR, $a$ also approaches infinity. Therefore, $\tilde{Q}_1$ is dominant when $m$ is an odd number, otherwise $\tilde{Q}_2$ becomes dominant. As a result, $\mathrm{P}_n $ in Lemma \ref{lemmahigh} can be further approximated as follows:
\begin{eqnarray}
   \left\{\begin{array}{ll} 
  \frac{\sum^{n-1-m}_{p=0} \frac{c_{mn}c_p\sqrt{\pi} (m-1)!}{(M-m-p)^{\frac{m}{2}+1}\left(\frac{m-1}{2}\right)! 2^{m}} }{\rho^{\frac{m}{2}}}  , &m\in\{1, 3, \cdots\}\\  \frac{\sum^{n-1-m}_{p=0} \frac{c_{mn}c_p  (m-1)!}{(M-m-p)^{\frac{m}{2}+1}(m-1)!!2^{\frac{m}{2}}}}{\rho^{\frac{m}{2}}}       , & m\in\{2, 4, \cdots\}
  \end{array} \right.,
\end{eqnarray}
which means that the decay rate of  $\mathrm{P}_n $ is $\rho^{-\frac{m}{2}}$, i.e., scheduling a user with poor channel conditions to act as the NOMA weak user is beneficial to increase  $\mathrm{P}_n $.

 {\it Remark 3:} Lemma \ref{lemmalow} indicates that, in the low SNR regime,  it is almost sure that  user $n$ can complete its data offloading by using   $T_m$ only. The reason is that, at low SNR, a user with poor channel conditions needs to use a significant amount of   time for offloading, which provides an ideal opportunity for   using  NOMA, i.e., user $n$ has more time to offload its task to the MEC server. For a similar reason,    another ideal situation for the application of NOMA-MEC is that $P_n^{ow}$ approaches infinity and $P_m^{ow}$ is fixed, as indicated  by   Corollary \ref{corollarylow}.

 {\it Remark 4:} 
If user $n$ completes its offloading within $T_m$, the latency of NOMA-MEC offloading can be significantly reduced,   but at a price that more energy is consumed compared to OMA-MEC.    Particularly, in order to strictly ensure that $N$ bits are offloaded  within $T_m$, the power used by user $n$ needs to satisfy the following constraint:
\begin{align}
\log\left(1+\frac{P^{ow}_n |h_n|^2}{P^{ow}_m |h_m|^2+1}\right)T_m>\log\left(1+P^{ow}_m |h_m|^2\right)T_m. 
\end{align}
Therefore, the minimal power for user $n$ is given by
\begin{align}
P^{ow}_n = \frac{|h_m|^2}{|h_n|^2}P^{ow}_m \left(1+P^{ow}_m|h_m|^2\right).
\end{align}
In OMA, if user $n$ is given the same amount of time ($T_m$) for offloading $N$ bits, user $n$'s power needs to satisfy the following:
\begin{align}
P_n^{OMA} = \frac{|h_m|^2}{|h_n|^2}P^{ow}_m.  
\end{align}
So  the price for the improved latency is for user $n$   to consume more energy, i.e.,  $P^{ow}_n-P_n^{OMA} = \frac{ (P^{ow}_m|h_m|^2 )^2}{|h_n|^2}$.

\subsection{Impact of NOMA on Offloading Energy Consumption} \label{subsection energy}
The energy inefficiency pointed out in  Remark 4 is due to the imposed constraint that user $n$ has to complete its offloading within $T_m$.  By removing this constraint, a modified NOMA-MEC protocol with better energy efficiency can be designed as described in the following.

 In order to have a fair comparison between OMA and NOMA, first consider the following modified OMA benchmark. In particular, assume that each user is allocated an equal-duration time slot with $T$ seconds for  offloading. Furthermore, denote    user $n$'s transmit power   in OMA  by $P^{ow}_i$, $i\in\{m,n\}$, which means that the energy consumption for user $i$ in OMA is $TP^{ow}_i$ and the amount of data sent within $T$ is $ T\log(1+P_i^{ow} |h_i|^2) $.
 
For the modified NOMA-MEC protocol, the two users use NOMA to transmit simultaneously  during the first time slot, and user $n$ solely occupies   the second time slot\footnote{Since   admitting user $n$ into the first time slot will not cause any performance degradation to user $m$,     only user $n$'s performance is focused.}. Assume that   user $n$'s  power  in NOMA is only a portion of that in OMA, i.e.,  $\beta P^{ow}_n$. Therefore,  the overall energy consumption for user $n$ in NOMA is $2T\beta P^{ow}_n$ and   the amount of data sent within $2T$ is given by
\begin{align}
  T\log\left(1+\frac{\beta P_n^{ow} |h_n|^2}{P_m^{ow} |h_m|^2+1}\right) +T\log(1+\beta P_n^{ow} |h_n|^2). 
\end{align}

$\beta$ is an energy reduction parameter and needs to be smaller than $\frac{1}{2}$ since the constraint that 
NOMA-MEC is more energy efficient than OMA-MEC is equivalent to the following: 
\begin{align}\label{power}
2T\beta P^{ow}_n< T  P^{ow}_n.
\end{align} 
However, with    $\beta$ satisfying  \eqref{power},  it is not guaranteed that NOMA-MEC can delivery the same amount of data as OMA-MEC, and the probability for this event can be expressed as follows: 
\begin{align}
\tilde{\mathrm{P}}_n\triangleq &\mathrm{P} \left( T\log\left(1+\frac{ \beta\rho_n  |h_n|^2}{\rho_m|h_m|^2+1}\right)\right.\\\nonumber&+\left.T\log(1+ \beta\rho_n |h_n|^2) \leq T\log(1+\rho_n |h_n|^2) \right).
\end{align} 

The following corollary provides the closed-form expression for $\tilde{\mathrm{P}}_n$. 

\begin{corollary}\label{corollary1}
If  $(1-\beta)\rho_m\geq \beta^2\rho_n$, the probability $\tilde{\mathrm{P}}_n$ can be expressed as follows: 
\begin{align}
\tilde{\mathrm{P}}_n  = &1-c_{mn}\sum^{n-1-m}_{p=0}c_p \sum^{m-1}_{l=0}c_l  \frac{ e^{ 
-(M-m-p)\frac{1-2\beta}{\beta^2\rho_n}
}}{(M-m-p)\tilde{a} },
\end{align}
otherwise
\begin{align}
\tilde{\mathrm{P}}_n  = &1-\frac{M!\sum^{m-1}_{l=0}c_l\frac{e^{-(M-m+l+1)\kappa_1 }}{M-m+l+1}}{(m-1)!(M-m)!} -c_{mn}\sum^{n-1-m}_{p=0}c_p\\\nonumber &\times  \sum^{m-1}_{l=0}c_l   \frac{  e^{ 
-(M-m-p)\frac{1-2\beta}{\beta^2\rho_n}
}\left(1 - e^{- \tilde{a}\kappa_1}\right)  }{\tilde{a}(M-m-p)} ,
\end{align}
where $\kappa_1=\frac{1-2\beta}{\beta^2\rho_n-(1-\beta)\rho_m}$ and
$\tilde{a}=\frac{\rho_m(1-\beta)(M-m-p)}{\beta^2 \rho_n}+p+l+1$.
\end{corollary}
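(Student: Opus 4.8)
The plan is to translate the data-volume inequality defining $\tilde{\mathrm{P}}_n$ into a simple linear inequality between the two ordered channel gains, and then to integrate the joint density of these order statistics over the resulting region. Writing $x=|h_m|^2$ and $y=|h_n|^2$, I would first drop the common factor $T$, exponentiate, and multiply through by $\rho_m x+1$ in
\begin{align*}
\left(1+\tfrac{\beta\rho_n y}{\rho_m x+1}\right)(1+\beta\rho_n y)\leq 1+\rho_n y .
\end{align*}
Cancelling the common terms and dividing by $\rho_n y$ collapses this to the linear condition $\beta^2\rho_n y \leq (1-\beta)\rho_m x + (1-2\beta)$, i.e.\ $y\leq c x + d$ with $c=\frac{(1-\beta)\rho_m}{\beta^2\rho_n}$ and $d=\frac{1-2\beta}{\beta^2\rho_n}$; note $d>0$ because $\beta<\frac12$. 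Thus $\tilde{\mathrm{P}}_n=\mathrm{P}(y\leq cx+d)$, and I will instead integrate the complementary ``good'' event $y>cx+d$ and subtract from one.

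Next I would record the joint density of the $m$-th and $n$-th order statistics of $M$ i.i.d.\ unit-mean exponential variables,
\begin{align*}
f_{m,n}(x,y)=c_{mn}(1-e^{-x})^{m-1}(e^{-x}-e^{-y})^{n-1-m}e^{-(M-n)y}e^{-x}e^{-y},\quad 0\leq x\leq y,
\end{align*}
and expand both binomials to obtain $f_{m,n}(x,y)=c_{mn}\sum_{l,p}c_lc_p\,e^{-(l+p+1)x}e^{-(M-m-p)y}$ with $c_l,c_p$ exactly as in the statement. The computation then reduces to elementary exponential integrals; in particular $\int_{cx+d}^{\infty}e^{-(M-m-p)y}\,dy$ produces the factor $\frac{1}{M-m-p}e^{-(M-m-p)d}$ and shifts the $x$-exponent to $(M-m-p)c+p+l+1$, which is precisely the $\tilde{a}$ of the corollary.

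The case split is the geometric heart of the argument: whether the threshold line $y=cx+d$ stays above the diagonal $y=x$ on all of $x\geq0$. Since $cx+d-x=(c-1)x+d$ and $d>0$, this happens iff $c\geq1$, i.e.\ iff $(1-\beta)\rho_m\geq\beta^2\rho_n$. In that case $y>cx+d$ already forces $y>x$, so I integrate $x$ over $(0,\infty)$ using $\int_0^\infty e^{-\tilde{a}x}\,dx=\tilde{a}^{-1}$, which yields the first displayed expression. When $c<1$ the line crosses the diagonal at $x^\star=\frac{d}{1-c}=\frac{1-2\beta}{\beta^2\rho_n-(1-\beta)\rho_m}=\kappa_1$; for $x<\kappa_1$ the active lower limit in $y$ is $cx+d$, giving $\int_0^{\kappa_1}e^{-\tilde a x}\,dx=\tilde a^{-1}(1-e^{-\tilde a\kappa_1})$, whereas for $x>\kappa_1$ the lower limit is simply $x$ and the inner integral over $y$ recovers the marginal survival $\mathrm{P}(|h_m|^2>\kappa_1)$. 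Collecting these pieces produces the second displayed expression.

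The step I expect to be the main obstacle is showing that the $x>\kappa_1$ contribution collapses to the clean form $\frac{M!}{(m-1)!(M-m)!}\sum_{l}c_l\frac{e^{-(M-m+l+1)\kappa_1}}{M-m+l+1}$. After integrating $y$ from $x$, the $p$-dependence in the exponent cancels and this piece carries the factor $c_{mn}\sum_p\frac{c_p}{M-m-p}$, so one must verify the combinatorial identity $c_{mn}\sum_{p=0}^{n-1-m}\frac{c_p}{M-m-p}=\frac{M!}{(m-1)!(M-m)!}$. This follows from the standard alternating-sum evaluation $\sum_{p}\binom{k}{p}(-1)^{k-p}(q-p)^{-1}=\frac{k!(q-k-1)!}{q!}$ with $k=n-1-m$ and $q=M-m$, and is the same identity, specialised to $\kappa_1=0$, that underlies Proposition~\ref{lemma2}. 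Everything else is bookkeeping over the two exponential integrals.
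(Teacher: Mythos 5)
Your proposal is correct and follows essentially the same route as the paper: reduce the rate inequality to the linear threshold $|h_n|^2\leq \frac{(1-\beta)\rho_m|h_m|^2+(1-2\beta)}{\beta^2\rho_n}$, split cases according to whether this threshold dominates the diagonal $y=x$ (crossing at $\kappa_1$ when $(1-\beta)\rho_m<\beta^2\rho_n$), and integrate the expanded order-statistics joint pdf, exactly as the paper does by invoking the steps of Lemma~\ref{lemmapn}. Your explicit verification of the alternating-sum identity $c_{mn}\sum_{p}\frac{c_p}{M-m-p}=\frac{M!}{(m-1)!(M-m)!}$, which collapses the $x>\kappa_1$ piece to the marginal survival of $|h_m|^2$, is a detail the paper leaves implicit but is the same mechanism underlying its computation of $T_2$.
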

\begin{proof} With some algebraic manipulations,  $\tilde{\mathrm{P}}_n$ can be   rewritten as follows:
\begin{align}
\tilde{\mathrm{P}}_n &=\mathrm{P} \left(  |h_n|^2\leq \frac{(1-\beta)(1+\rho_m|h_m|^2)-\beta}{\beta^2\rho_n}\right).
\end{align} 

If  $(1-\beta)\rho_m\geq \beta^2\rho_n$,  $\frac{(1-\beta)(1+\rho_m|h_m|^2)-\beta}{\beta^2\rho_n}\geq |h_m|^2$ always holds. Applying the joint pdf of $h_m$ and $h_n$ and also following steps similar to those in the proof for Lemma \ref{lemmapn}, the first part of the corollary can be obtained.  

  If $(1-\beta)\rho_m< \beta^2\rho_n$,  whether  $\frac{(1-\beta)(1+\rho_m|h_m|^2)-\beta}{\beta^2\rho_n}\geq |h_m|^2$ holds is depending on the value of $|h_m|^2$. Particular, if $|h_m|^2<\kappa_1$, $\frac{(1-\beta)(1+\rho_m|h_m|^2)-\beta}{\beta^2\rho_n}\geq |h_m|^2$ holds,  otherwise $\frac{(1-\beta)(1+\rho_m|h_m|^2)-\beta}{\beta^2\rho_n}< |h_m|^2$. Hence the probability $\tilde{\mathrm{P}}_n$ can be rewritten as follows: 
\begin{align}
\tilde{\mathrm{P}}_n  = &\mathrm{P} \left(  |h_m|^2\leq \kappa_1\right)\\\nonumber &-\mathrm{P} \left( |h_m|^2\leq \kappa_1,  |h_n|^2\leq \frac{(1-\beta)(1+\rho_m|h_m|^2)-\beta}{\beta^2\rho_n}\right).\end{align}
   
Following steps similar to those in the proof for Lemma \ref{lemmapn},  the second part of the corollary can also  be obtained and hence the proof for the corollary is complete.  
\end{proof}
 
{\it Remark 5:} It is desirable   to have $\tilde{\mathrm{P}}_n\rightarrow 0 $ which means that NOMA-MEC can deliver more data while using less energy compared to OMA-MEC. However,  the  asymptotic property   of $\tilde{\mathrm{P}}_n$ is     depending on whether $(1-\beta)\rho_m< \beta^2\rho_n$ holds. 

\begin{itemize}
\item For the case   $\rho_m$ is a constant and $\rho_n\rightarrow \infty$, we have $(1-\beta)\rho_m< \beta^2\rho_n$.  In this case, $\tilde{\mathrm{P}}_n $ approaches zero,   since  
\begin{align}
\tilde{\mathrm{P}}_n  \leq &1-\frac{M!\sum^{m-1}_{l=0}c_l\frac{e^{-(M-m+l+1)\kappa_1 }}{M-m+l+1}}{(m-1)!(M-m)!}   \\\nonumber 
\underset{(10)}{\approx} &\frac{M!}{(M-m)!}\frac{(1-2\beta)^m}{\beta^{2m}\rho_n^m} \rightarrow 0,
\end{align}
where step (10) follows from steps similar to those in the proof for Lemma \ref{lemmahigh}. For the case that both $\rho_m$ and $\rho_n$ approach infinity and $\frac{\rho_m}{\rho_n}<\frac{\beta^2}{1-\beta}$, the same conclusion can be obtained.

\item  For the case that both $\rho_m$ and $\rho_n$ approach infinity and $\frac{\rho_m}{\rho_n}\geq \frac{\beta^2}{1-\beta}$,  $\tilde{\mathrm{P}}_n $ approaches a non-zero constant, since    
 \begin{align}\nonumber 
\mathrm{P}_n  \rightarrow   &1-c_{mn}\sum^{n-1-m}_{p=0}c_p \sum^{m-1}_{l=0}c_l  \frac{  1}{(M-m-p)\tilde{a} }. 
\end{align}

By applying Lemma \ref{lemma2}, the following holds  
\begin{align}
 \sum^{m-1}_{l=0}c_l  \frac{ 1}{ (\tilde{b}+p+l+1)}  =\frac{m!}{(\tilde{b}+p+1) \cdots (\tilde{b}+p+m)},
 \end{align}
 where $\tilde{b}=\tilde{a}-p-l-1$. 
 Therefore, at high SNR, the probability can be approximated as follows:
 \begin{align}\nonumber 
\mathrm{P}_n  \rightarrow  &1-\sum^{n-1-m}_{p=0}  \frac{m!c_pc_{mn}}{(M-m-p)\prod^{m}_{i=1}(\tilde{b}+p+i) },
 \end{align}
 which is a non-zero constant and not a function of the SNR. 
  If $\rho_n$ is a constant, we will have  $(1-\beta)\rho_m\geq  \beta^2\rho_n$ when $\rho_m\rightarrow \infty$. In this case, $\tilde{\mathrm{P}}_n $ also approaches a non-zero constant.  
  \end{itemize}
  
\section{The Application of  NOMA Downlink Transmission to MEC}\label{section downlink}
This section is to consider another type of MEC scenarios with $M=1$ and $L=K$, i.e.,  a single user has $K$ tasks to be offloaded to $K$ MEC servers. Assume that   the MEC servers   are ordered as follows:
\begin{align}\label{order2}
|g_1|^2\leq \cdots \leq |g_K|^2,
\end{align}
where $|g_m|^2$ denotes the channel gain between the user and MEC server $m$. If OMA is used, the user uses $K$ dedicated time slots with $T$ seconds each to offload its tasks to the servers individually. By using NOMA downlink transmission, the user can offload multiple tasks to multiple servers simultaneously.  Similar to the previous section, we assume that   two  MEC servers, server $m$ and server $n$, are scheduled to perform NOMA.

\subsection{Impact of NOMA on Offloading Latency}
By imposing the constraint that  the user  offloads the task intended to MEC server $n$ within the time slot which would be solely occupied by  server $m$ in the OMA mode, the overall offloading latency  can be significantly reduced. Particularly, in NOMA, the numbers of bits transmitted to the two MEC servers within one time slot  are given by
\begin{align}\label{nm1}
 {N}_m^{NOMA} = T\log\left(1+\frac{P^{ow}\alpha_m^2 |g_m|^2}{P_N+P^{ow} \alpha_n^2|g_m|^2}\right), 
\end{align}
and
\begin{align}\label{nm2}
 {N}_n^{NOMA} =& T\log\left(1+ \frac{P^{ow}}{P_N} \alpha_n^2|g_n|^2\right), 
\end{align}
where $ \alpha_m$ and $\alpha_n$   denote the NOMA power allocation coefficients, and $P^{ow}$ denotes the user's transmit power\footnote{For notational simplicity, subscript $i$ is omitted since there is a single user  ($M=1$).}.        

Therefore, the probability  for   the user to finish offloading its tasks to the   MEC servers    can be expressed as follows: 
\begin{align}
\mathrm{P}^D_m &= \mathrm{P}\left( T\log\left(1+\frac{P^{ow}\alpha_m^2 |g_m|^2}{1+P^{ow} \alpha_n^2|g_n|^2}\right)\geq N_{m} \right), 
\end{align}
and 
\begin{align}
\mathrm{P}^D_n &= \mathrm{P}\left( T \log\left(1+ P^{ow} \alpha_n^2|g_n|^2\right)\geq N_{n} \right),
\end{align}
where the index for the user is omitted, i.e.,  $N_{i,l}$ is simplified as $N_l$.  
When $T$ and $N_{l}$ are fixed, the above probabilities  can be obtained straightforwardly from the existing literature of NOMA~\cite{Zhiguo_CRconoma}.   

\subsection{Impact of NOMA on Offloading Energy Consumption}
 Similar to Section \ref{subsection energy}, a modified NOMA-MEC scheme is considered by using two time slots. During the first time slot, the user  offloads one task to server $m$ and parts of a task to server $n$ simultaneously.  The second time slot is dedicated for the user to offload the remaining parts of the task intended to server $n$. In OMA, the user offloads the two tasks   in   two time slots separately.  Denote the overall energy consumption  in the OMA and NOMA modes by $ {E}^{OMA}$ and $ {E}^{NOMA}$, respectively.    

It is important to point out that the use of NOMA downlink brings a change to the expressions of the offloading  rates. On the one hand,  in OMA, the numbers of bits transmitted to the two MEC servers are given by
\begin{align}
\tilde{N}_i^{OMA} = T\log(1+ \rho |g_i|^2), 
\end{align}
for $i\in\{m,n\}$, 
where $\rho=\frac{{P}^{ow} }{P_N}$,   and it is assumed that the user uses the same transmit power during the two equal-length time slots.  Since $T$ seconds are used, the overall energy consumed in  the two time slots in OMA is $E^{OMA}=2TP^{ow}$ . 

On the other hand,  in NOMA, it is assumed that during the first time slot, the user uses the same transmit power as in the   OMA mode, and uses $\tilde{\beta} P^{ow}$ as the transmit power during the second time slot, where $\tilde{\beta}$ denotes a parameter for  the energy reduction. Therefore, in NOMA, the numbers of bits transmitted to the two MEC servers are  given by \cite{Nomading}
\begin{align}
\tilde{N}_m^{NOMA} =  T\log\left(1+\frac{\rho \alpha_m^2 |g_m|^2}{1+\rho \alpha_n^2|g_n|^2}\right), 
\end{align}
and
\begin{align}
\tilde{N}_n^{NOMA} =& T\log\left(1+ \rho \alpha_n^2|g_n|^2\right) +T\log(1+\tilde{\beta} \rho |g_n|^2) ,
\end{align}
respectively. 
   Since $2T$ seconds are used, the overall consumed energy is  $E^{NOMA}= (1+\tilde{\beta})TP^{ow} $ which implies 
\begin{align}\label{energy}
E^{NOMA}< E^{OMA},
\end{align}
 if $\tilde{\beta} < 1$. 

To ensure that   server  $m$ is connected in the NOMA mode with the same reliability as in   OMA,   the cognitive ratio power allocation policy is used as follows: \cite{Zhiguo_CRconoma}
\begin{align}\label{cr}
 \alpha_n^2 = \max\left\{0, \frac{\rho |g_m|^2 -\epsilon}{\rho |g_m|^2(1+\epsilon)}\right\},
\end{align}
where it is assumed that the user's tasks contain the same number of bits, i.e., $N_{m}=N_{n}\triangleq N$ and $\epsilon=2^{\frac{N}{T}}-1$.  Since  MEC server $m$ experiences the same reliability in the OMA and NOMA modes,  we will only focus on the performance of server $n$   in the following. 

With a choice of $\tilde{\beta}$ satisfying \eqref{energy}, NOMA-MEC uses less energy than OMA-MEC, but it is not guaranteed that NOMA can deliver the same amount of data as OMA-MEC, which is measured by the following probability:  
\begin{align}
\tilde{\mathrm{P}}^D_n\triangleq &\mathrm{P} \left( T\log\left(1+\alpha_n^2\rho|g_n|^2\right)\right.\\\nonumber&+\left.T\log(1+ \tilde{\beta}\rho |g_n|^2) \leq T\log(1+\rho |g_n|^2) \right).
\end{align}
The following lemma provides the closed-form expression for $\tilde{\mathrm{P}}^D_n$.

\begin{lemma}\label{lemmaqmn}
For a fixed choice of $\tilde{\beta}$, the probability for OMA-MEC to deliver more data to server $n$ than NOMA-MEC can be approximated as follows: 
\begin{align} \label{xccc}
\tilde{\mathrm{P}}^D_n= & 1-\frac{K!\sum^{m-1}_{l=0}c_l\frac{e^{-(K-m+l+1)\frac{\epsilon}{\rho} }}{K-m+l+1}}{(m-1)!(K-m)!} \\\nonumber &+c_{mn} \sum^{n-1-m}_{p=0}c_p\sum^{N}_{i=1} \frac{\pi}{N} \left(\frac{\epsilon}{2\tilde{\beta}\rho}- \frac{\epsilon}{2\rho}\right)\\\nonumber &\times f\left(\left(\frac{\epsilon}{2\tilde{\beta}\rho}+ \frac{\epsilon}{2\rho}\right) +\left(\frac{\epsilon}{2\tilde{\beta}\rho}- \frac{\epsilon}{2\rho}\right)\theta_i \right) \sqrt{1-\theta_i^2} ,
\end{align}
where $\theta_i=\cos\left(\frac{2i-1}{2N}\pi\right)$, $N$ denotes the Chebyshev-Gauss approximation parameter, and
\begin{align}
f(x) =& e^{-(p+1)x} (1-e^{-x})^{m-1}\\ \nonumber&\times   \frac{e^{-(K-m-p)x}-e^{-(K-m-p)\frac{\rho x[(1-\tilde{\beta})(1+\epsilon)-1]+\epsilon}{\rho\tilde{\beta}(\rho x-\epsilon)} }}{K-m-p}.
\end{align}
\end{lemma}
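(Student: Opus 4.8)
The plan is to reduce the two-dimensional probability $\tilde{\mathrm{P}}^D_n$ to a single integral over $|g_m|^2$ that is then handled by Gauss--Chebyshev quadrature. First I would exponentiate the defining event. Writing $y=|g_n|^2$, the inequality $T\log(1+\alpha_n^2\rho y)+T\log(1+\tilde\beta\rho y)\le T\log(1+\rho y)$ is equivalent, after dividing through by $\rho y>0$, to $\alpha_n^2+\tilde\beta+\alpha_n^2\tilde\beta\rho y\le 1$, i.e. $y\le \frac{1-\alpha_n^2-\tilde\beta}{\alpha_n^2\tilde\beta\rho}$ whenever $\alpha_n^2>0$. Substituting the cognitive-radio coefficient \eqref{cr} and splitting on the sign of $\rho|g_m|^2-\epsilon$ is the natural next move.

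On the branch $\rho|g_m|^2\le\epsilon$ we have $\alpha_n^2=0$, and the event reduces to $\tilde\beta\rho y\le\rho y$, which holds identically because $\tilde\beta<1$. Hence this branch contributes exactly $\mathrm{P}(|g_m|^2\le \epsilon/\rho)$, the CDF of the $m$-th order statistic of $K$ i.i.d.\ unit-mean exponentials evaluated at $\epsilon/\rho$. Expanding $(1-e^{-x})^{m-1}$ with the coefficients $c_l$ and integrating term by term gives $\frac{K!}{(m-1)!(K-m)!}\sum_{l}c_l\frac{1-e^{-(K-m+l+1)\epsilon/\rho}}{K-m+l+1}$, and invoking Proposition~\ref{lemma2} to evaluate the constant part to $1$ collapses this to the $1-\frac{K!}{(m-1)!(K-m)!}\sum_l c_l\frac{e^{-(K-m+l+1)\epsilon/\rho}}{K-m+l+1}$ term appearing in \eqref{xccc}.

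On the branch $\rho|g_m|^2>\epsilon$ I would substitute $\alpha_n^2=\frac{\rho|g_m|^2-\epsilon}{\rho|g_m|^2(1+\epsilon)}$ into the threshold above; after simplification the event becomes $|g_n|^2\le \frac{\rho|g_m|^2[(1-\tilde\beta)(1+\epsilon)-1]+\epsilon}{\rho\tilde\beta(\rho|g_m|^2-\epsilon)}$, which is precisely the upper limit appearing in the exponent of $f(\cdot)$. Because $|g_n|^2\ge|g_m|^2$ for order statistics, this region is nonempty only while the threshold exceeds $|g_m|^2$; setting the two equal and clearing denominators yields the quadratic $\tilde\beta(\rho x)^2-(\epsilon-\tilde\beta)(\rho x)-\epsilon=0$ in $x=|g_m|^2$, whose discriminant is the perfect square $(\epsilon+\tilde\beta)^2$ and whose only positive root is $\rho x=\epsilon/\tilde\beta$. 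This pins the \emph{exact} finite integration limit $|g_m|^2\in[\epsilon/\rho,\ \epsilon/(\tilde\beta\rho)]$. I would then integrate the joint density of the $m$-th and $n$-th order statistics over $\{|g_m|^2\le|g_n|^2\le \text{threshold}\}$, expand $(e^{-x}-e^{-y})^{n-m-1}$ with the coefficients $c_p$ exactly as in the proof of Lemma~\ref{lemmapn}, and perform the inner integral over $|g_n|^2$ in closed form, producing the integrand $f(x)$ together with the prefactor $c_{mn}c_p$.

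The remaining single integral $c_{mn}\sum_p c_p\int_{\epsilon/\rho}^{\epsilon/(\tilde\beta\rho)}f(x)\,dx$ has no elementary antiderivative because of the rational exponent inside $f$, so the last step is numerical: apply the $N$-point Gauss--Chebyshev rule after the affine change of variable mapping $[-1,1]$ onto $[\epsilon/\rho,\epsilon/(\tilde\beta\rho)]$, namely $x=\left(\frac{\epsilon}{2\tilde\beta\rho}+\frac{\epsilon}{2\rho}\right)+\left(\frac{\epsilon}{2\tilde\beta\rho}-\frac{\epsilon}{2\rho}\right)\theta_i$ with Jacobian $\left(\frac{\epsilon}{2\tilde\beta\rho}-\frac{\epsilon}{2\rho}\right)$ and node weight $\frac{\pi}{N}\sqrt{1-\theta_i^2}$; this reproduces \eqref{xccc}. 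The main obstacle is the middle step: recognising that the order-statistic constraint $|g_n|^2\ge|g_m|^2$, rather than any approximation, is what truncates the $|g_m|^2$-integral, and verifying that the resulting quadratic has the single clean positive root $\epsilon/(\tilde\beta\rho)$; everything downstream is the same bookkeeping used in Lemma~\ref{lemmapn}, and the only genuine approximation is the final quadrature.
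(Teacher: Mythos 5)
Your proposal is correct and follows essentially the same route as the paper's Appendix E: the same split at $|g_m|^2\lessgtr \epsilon/\rho$, the same reduction of the NOMA-versus-OMA event to the threshold $|g_n|^2\leq \frac{\rho |g_m|^2[(1-\tilde{\beta})(1+\epsilon)-1]+\epsilon}{\rho\tilde{\beta}(\rho|g_m|^2-\epsilon)}$, the same use of the ordering constraint to truncate the $|g_m|^2$-integral at $\epsilon/(\tilde{\beta}\rho)$ via the quadratic whose roots are $-1/\rho$ and $\epsilon/(\tilde{\beta}\rho)$, and the same order-statistic expansion followed by Gauss--Chebyshev quadrature. Your explicit discriminant computation $(\epsilon+\tilde{\beta})^2$ is a slightly cleaner way to exhibit the roots than the paper's bare assertion, but the argument is the same.
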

\begin{proof}
Please refer to Appendix E. 
 \end{proof}
 
 The high SNR behavior of $\tilde{\mathrm{P}}^D_n$ is shown  in the following lemma.
 \begin{lemma}\label{lemma5}
 At high SNR, i.e., $\rho\rightarrow \infty$, $\tilde{\mathrm{P}}^D_n$ can be approximated as follows:
 \begin{align}
\tilde{\mathrm{P}}^D_n \doteq \rho^{-m},
 \end{align}
 where $\tilde{f}(\rho)\doteq \rho^{-d}$ denotes the exponential equality, i.e., $d= - \underset{\rho\rightarrow \infty}{\lim}\frac{\log\tilde{f}(\rho)}{\log \rho}$ \cite{Zhengl03}. 
 \end{lemma}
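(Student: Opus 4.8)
The plan is to establish the exponential equality by sandwiching $\tilde{\mathrm{P}}^D_n$ between two probabilities that both decay as $\rho^{-m}$, working directly from the defining event
\begin{align}\nonumber
\tilde{\mathrm{P}}^D_n = \mathrm{P}\left(\left(1+\alpha_n^2\rho|g_n|^2\right)\left(1+\tilde{\beta}\rho|g_n|^2\right)\leq 1+\rho|g_n|^2\right),
\end{align}
which follows from the definition of $\tilde{\mathrm{P}}^D_n$ after dividing by $T$ and exponentiating, rather than grinding the Chebyshev-Gauss expression in \eqref{xccc}. First I would split on the cognitive-radio allocation \eqref{cr}. When $\rho|g_m|^2\leq \epsilon$ one has $\alpha_n^2=0$, the left-hand side collapses to $1+\tilde{\beta}\rho|g_n|^2$, and since $\tilde{\beta}<1$ the event holds almost surely regardless of $|g_n|^2$. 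Hence $\tilde{\mathrm{P}}^D_n\geq \mathrm{P}\left(|g_m|^2\leq \frac{\epsilon}{\rho}\right)$, which supplies the lower bound.

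Second, for the upper bound I would show the complementary region $\rho|g_m|^2>\epsilon$ also contributes only $O(\rho^{-m})$. Writing $y=\rho|g_n|^2>0$ and expanding, the event is equivalent to $\alpha_n^2\tilde{\beta}y\leq 1-\alpha_n^2-\tilde{\beta}$; because $\alpha_n^2\geq 0$ this forces $\alpha_n^2|g_n|^2\leq \frac{1-\tilde{\beta}}{\tilde{\beta}\rho}$. Using the ordering $|g_n|^2\geq |g_m|^2$ from \eqref{order2} together with the identity $\alpha_n^2|g_m|^2=\frac{\rho|g_m|^2-\epsilon}{\rho(1+\epsilon)}$ read off from \eqref{cr}, this chain yields $|g_m|^2\leq \frac{1}{\rho}\left[\epsilon+\frac{(1+\epsilon)(1-\tilde{\beta})}{\tilde{\beta}}\right]\triangleq \frac{C_1}{\rho}$. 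Therefore $\tilde{\mathrm{P}}^D_n\leq \mathrm{P}\left(|g_m|^2\leq \frac{C_1}{\rho}\right)$.

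The proof then closes by evaluating the order-statistic cdf of $|g_m|^2$: as the $m$-th smallest of $K$ i.i.d.\ unit-mean exponential variables, $\mathrm{P}(|g_m|^2\leq t)=\sum_{j=m}^{K}\binom{K}{j}(1-e^{-t})^j e^{-(K-j)t}\sim \binom{K}{m}t^m$ as $t\rightarrow 0$. Substituting $t=\frac{\epsilon}{\rho}$ and $t=\frac{C_1}{\rho}$ makes both the lower and upper bounds $\Theta(\rho^{-m})$, so $-\lim_{\rho\rightarrow\infty}\frac{\log \tilde{\mathrm{P}}^D_n}{\log\rho}=m$, which is the claim. Equivalently, the same exponent can be read off the closed form of Lemma \ref{lemmaqmn}: Taylor-expanding $e^{-(K-m+l+1)\epsilon/\rho}$, the zeroth-order term is annihilated by Proposition \ref{lemma2}, while the finite-difference sum $\sum_{l}c_l(K-m+l+1)^{j-1}$ vanishes for $1\leq j<m$, so the leading surviving contribution of $1-\frac{K!}{(m-1)!(K-m)!}\sum_l c_l\frac{e^{-(K-m+l+1)\epsilon/\rho}}{K-m+l+1}$ is of order $\rho^{-m}$, matching $\binom{K}{m}\epsilon^m\rho^{-m}$.

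The main obstacle I anticipate is the upper-bound step in the region $\alpha_n^2>0$: the naive bound $y\leq \frac{1-\alpha_n^2-\tilde{\beta}}{\alpha_n^2\tilde{\beta}}$ blows up as $\alpha_n^2\rightarrow 0^{+}$ (i.e.\ as $|g_m|^2\downarrow \epsilon/\rho$), so $|g_n|^2$ is \emph{not} uniformly $O(1/\rho)$ there. Resolving this needs the less obvious step of lower-bounding $\alpha_n^2|g_n|^2$ by $\alpha_n^2|g_m|^2$ through the order statistics, which transfers the constraint back onto $|g_m|^2$ and keeps the region's probability at order $\rho^{-m}$ rather than something larger. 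If one instead argues from \eqref{xccc}, the analogous difficulty is confirming that the Chebyshev-Gauss term neither supplies a slower-decaying contribution nor cancels the leading $\rho^{-m}$ coefficient; there the integrand behaves like $x^{m-1}$ over an integration window of width $O(1/\rho)$, again yielding order $\rho^{-m}$, but ruling out cancellation is cleaner through the nonnegativity-based sandwich above.
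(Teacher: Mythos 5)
Your proof is correct, and it takes a genuinely different route from the paper's. The paper works from the decomposition $\tilde{\mathrm{P}}^D_n=\mathrm{P}\left(|g_m|^2\leq \epsilon/\rho\right)+\mathrm{Q}_{mn}$ established in the proof of Lemma \ref{lemmaqmn}: the first term is shown to be $\doteq\rho^{-m}$ by the same series-expansion technique as in Lemma \ref{lemmahigh}, and $\mathrm{Q}_{mn}$ is then attacked head-on via the mean value theorem for integrals (replacing the integral over $[\epsilon/\rho,\,\epsilon/(\tilde{\beta}\rho)]$ by the integrand at an intermediate point $\xi$), followed by series expansions, the finite-difference identities $\sum_l c_l l^t=0$ for $t\leq m-2$ and $\sum_l c_l l^{m-1}=(-1)^{m-1}(m-1)!$, and Proposition \ref{lemma2}, yielding $\mathrm{Q}_{mn}\doteq \rho^{-(m-1)}\cdot\rho^{-1}=\rho^{-m}$. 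You never touch that closed form: you sandwich the defining event directly. Your lower bound (when $\alpha_n^2=0$ the event is certain because $\tilde{\beta}<1$) is the same observation the paper makes inside the proof of Lemma \ref{lemmaqmn}, but your upper bound is new: the event forces $\alpha_n^2|g_n|^2\leq (1-\tilde{\beta})/(\tilde{\beta}\rho)$, and the ordering \eqref{order2} together with $\alpha_n^2|g_m|^2=(\rho|g_m|^2-\epsilon)/(\rho(1+\epsilon))$ from \eqref{cr} transfers this to $|g_m|^2\leq C_1/\rho$ — exactly the right fix for the $\alpha_n^2\rightarrow 0^{+}$ blow-up you identify, and it correctly reduces everything to the small-argument behavior $\mathrm{P}(|g_m|^2\leq t)\sim \binom{K}{m}t^m$ of the order-statistic cdf. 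Since both bounds are $\Theta(\rho^{-m})$ and both terms of the paper's decomposition are nonnegative probabilities, the exponent follows with no cancellation concerns. What each approach buys: yours is more elementary and more robust — no mean value theorem with a $\rho$-dependent intermediate point, no asymptotic bookkeeping of binomial sums, and it structurally rules out the cancellation issue that a purely formal expansion of \eqref{xccc} would have to address — whereas the paper's calculation, at the cost of heavier machinery, produces explicit leading coefficients that quantify the $\rho^{-m}$ decay rather than only identifying the SNR exponent.
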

 \begin{proof}
 Please refer to Appendix F. 
\end{proof}
{\it Remark 6:} Lemma \ref{lemma5} shows that at high SNR, the probability for NOMA-MEC to outperform OMA-MEC becomes one, which can be explained   in the following. Recall that the use of the cognitive radio power allocation policy is to satisfy server $m$'s requirements  before allocating any power to server $n$. At high SNR, more power becomes available to server $n$, which means that a significant amount of data can be offloaded to server $n$ during the first time slot, and hence the overall amount of the offloaded data over the two time slots  is also improved.    

{\it Remark 7:} Lemma \ref{lemma5} also indicates that MEC server $m$'s channel condition has a critical impact on the probability $\tilde{\mathrm{P}}_n^D$. In particular, scheduling a server with better channel conditions to act as server $m$ improves the probability that NOMA-MEC outperforms OMA-MEC.  It is worth pointing out that, for the application of NOMA uplink transmission to MEC, a different conclusion was made in Lemma \ref{lemmahigh} which states that scheduling a user with poor channel conditions is beneficial to the implementation of NOMA.

\section{Simulation Results and Discussions}
In this section, the performance of NOMA-MEC is evaluated by using computer simulations, where the accuracy of the developed analytical results is also verified. 

The impact of NOMA uplink transmission on MEC is examined first. Recall that  the NOMA-MEC schemes described in Section \ref{section uplink} ensure that user $n$ is served without causing any performance degradation to user $m$, so only user $n$'s performance is focused. 
In Fig. \ref{fig01x}, the offloading probability $\mathrm{P}_n$ is shown as a function of user $n$'s transmit power. Note that    the noise power is assumed to be normalized, which means that    user $n$'s transmit power  is the same as $\rho_n$.   Fig. \ref{fig01x} shows that the behavior of $\mathrm{P}_n$ is depending on the relationship between the two users' transmit powers. 
When user $m$'s transmit power ($\rho_m$) is fixed, Fig. \ref{fig 1 b} demonstrates that increasing user $n$'s transmit power can increase $\mathrm{P}_n$. This phenomenon can be explained in the following. When $\rho_m$ is fixed, the time duration required by user $m$ to offload its task, $T_m$, is also fixed. On the other hand, increasing $\rho_n$   increases  user $n$'s offloading data rate, which makes it more likely for user $n$ to complete its offloading within the fixed time duration $T_m$. 

 If both $\rho_m$ and $\rho_n$ approach infinity and  the ratio of the two users' powers is a constant, Fig. \ref{fig 1 a} shows that $\mathrm{P}_n$ goes to zero. This phenomenon is due to the fact that increasing $\rho_m$ reduces $T_m$, the time duration required by user $m$ to complete its offloading. On the other hand, recall that  $\mathrm{P}_n$ measures  the likelihood  for user $n$ to complete its offloading by only using $T_m$, the time slot which would be solely occupied by user $m$ in the OMA mode. Therefore, reducing $T_m$ means that there is less opportunity for user $n$ to use NOMA for offloading, which leads to the reduction of  $\mathrm{P}_n$. It is worth pointing out that the two subfigures in Fig. \ref{fig01x} show that the curves for the analytical results perfectly match the ones for the simulation results, which verifies the accuracy of the developed analytical results.  

\begin{figure}[!htp] 
\begin{center}\subfigure[$\rho_m=10$ dB ]{\label{fig 1 b}\includegraphics[width=0.45\textwidth]{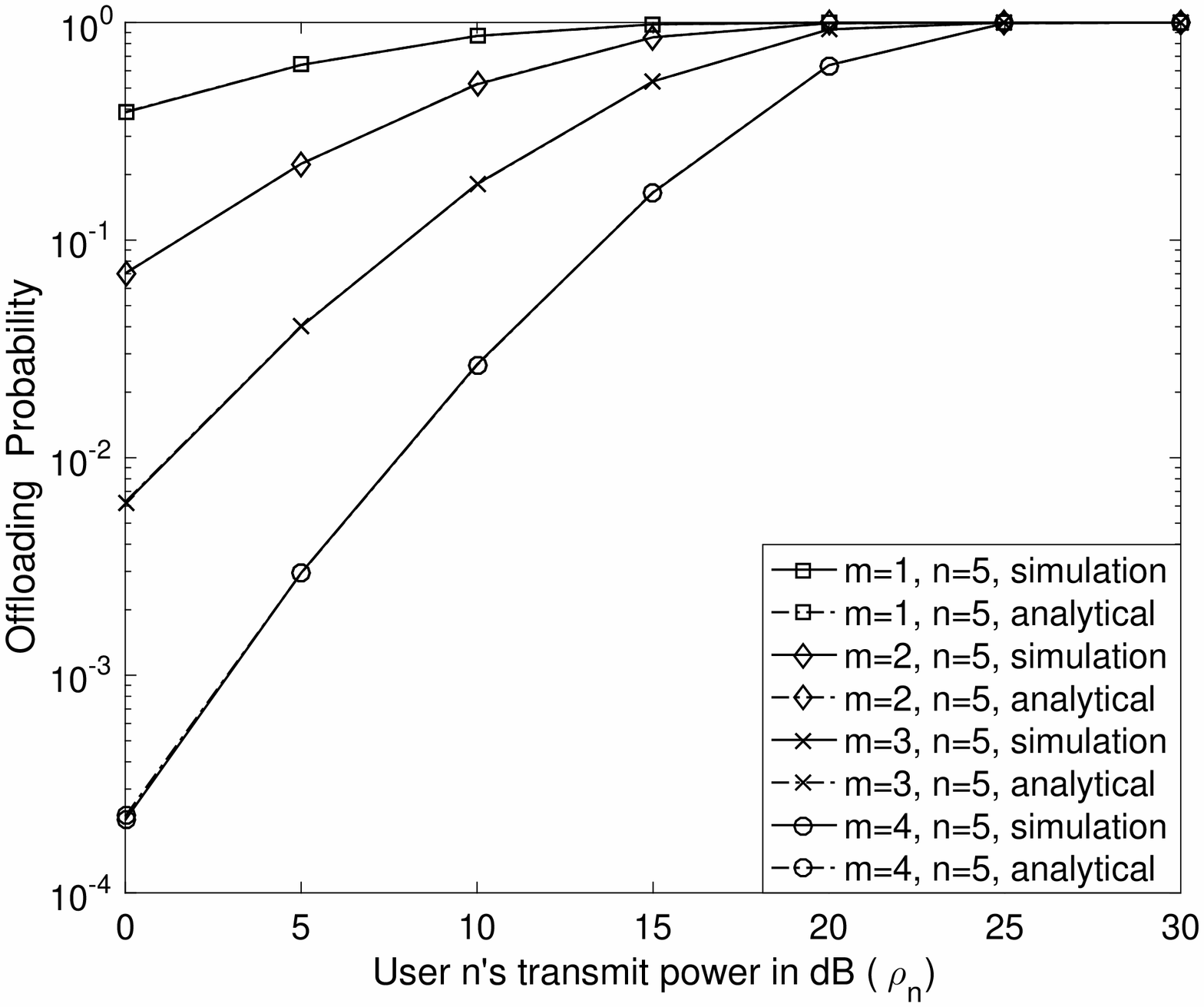}} \subfigure[ $\eta=\frac{\rho_n}{\rho_m}=2$ ]{\label{fig 1 a}\includegraphics[width=0.45\textwidth]{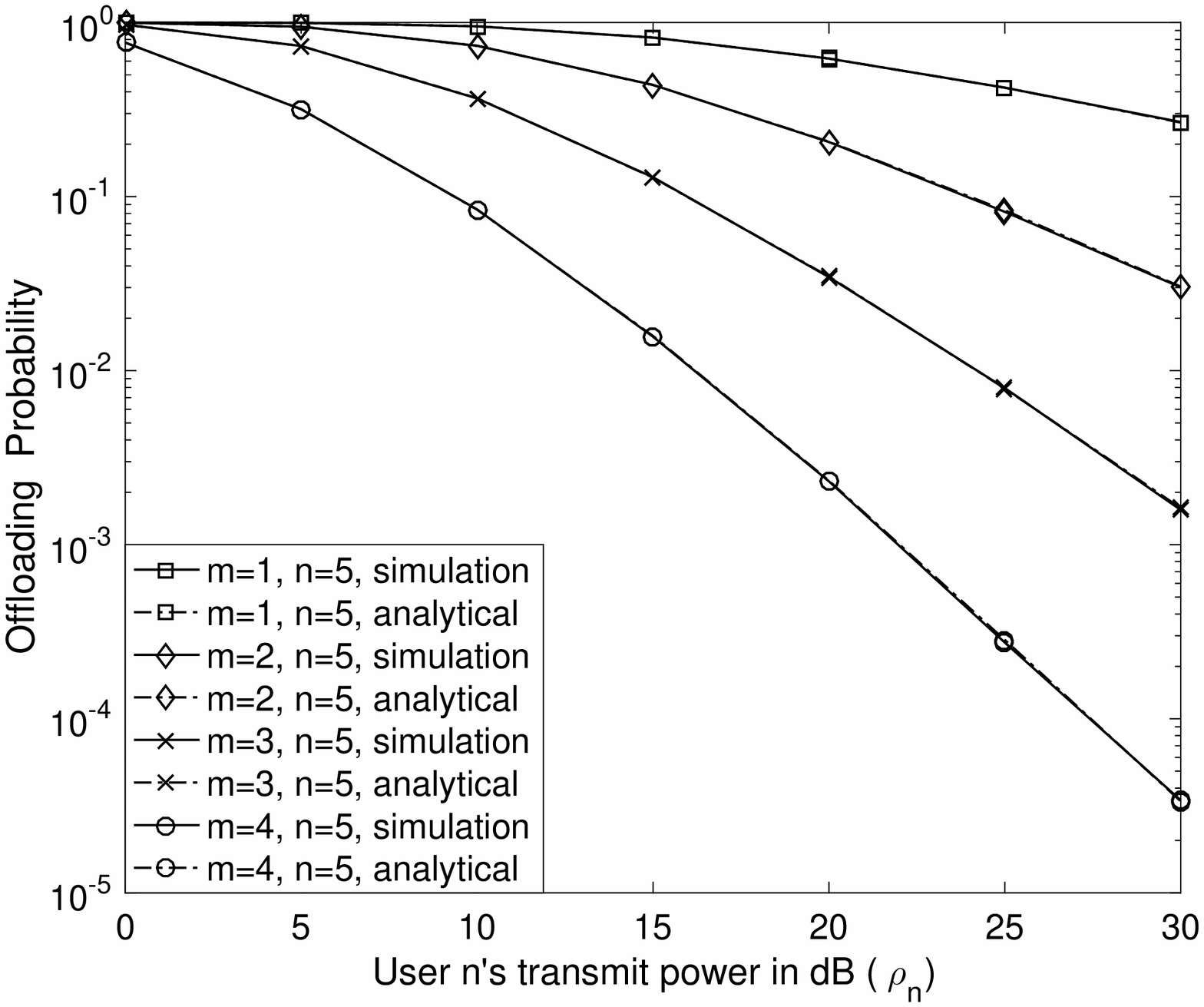}}
\vspace{-1em}
\end{center}
\caption{  The probability for user $n$ to complete its offloading by using the time slot allocated to user $m$, $\mathrm{P}_n$. There are five users $M=5$.    }\label{fig01x}\vspace{-0.5em}
\end{figure}

\begin{figure}[!htbp]\centering 
    \epsfig{file=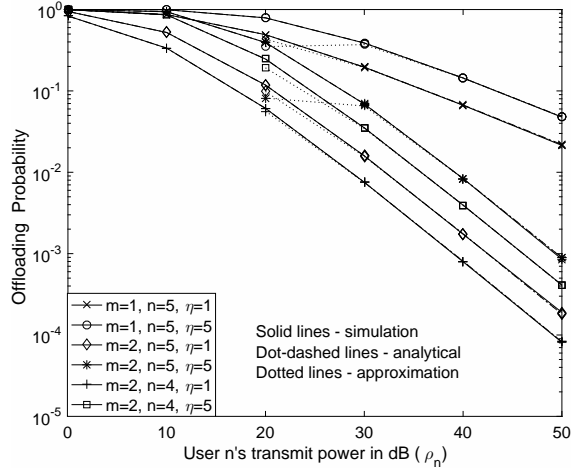, width=0.45\textwidth, clip=}\vspace{-1em}
\caption{ The impact of the parameters, such as $m$, $n$, and $\eta$ on the offloading probability, $\mathrm{P}_n$. There are five users $M=5$ and $\eta=\frac{\rho_n}{\rho_m}$.   \vspace{-1em} }\label{fig2}
\end{figure}

In Fig. \ref{fig2}, the impact of the parameters, such as $m$, $n$, and $\eta$, on the offloading probability $\mathrm{P}_n$ is shown. As pointed out in the remarks      for Lemma \ref{lemmahigh},   the probability $\mathrm{P}_n$ is inversely proportional to $\rho_n^{\frac{m}{2}}$. This conclusion is confirmed by Fig. \ref{fig2} as one can observe that the choice of  $m$ has a critical impact on  $\mathrm{P}_n$. On the other hand,  reducing $n$ also reduces the probability, but its impact on   the probability is not as significant as $m$.   For a fixed $\rho_n$,    increasing $\eta$ reduces user $m$'s transmit power, which means that user $m$ needs more time for offloading, i.e., $T_m$ is increased.  Since there is more time available for user $n$ to offload, the offloading probability is improved, as can be observed from Fig. \ref{fig2}.  Furthermore, the high SNR approximation obtained in Lemma \ref{lemmahigh} is also verified in the figure. While this approximation is not accurate in the low SNR regime, it matches the simulation results perfectly at high SNR.

 \begin{figure}[!htp]\vspace{-1em}
\begin{center} \subfigure[$\rho_m=10$ dB ]{\label{fig 31 a}\includegraphics[width=0.45\textwidth]{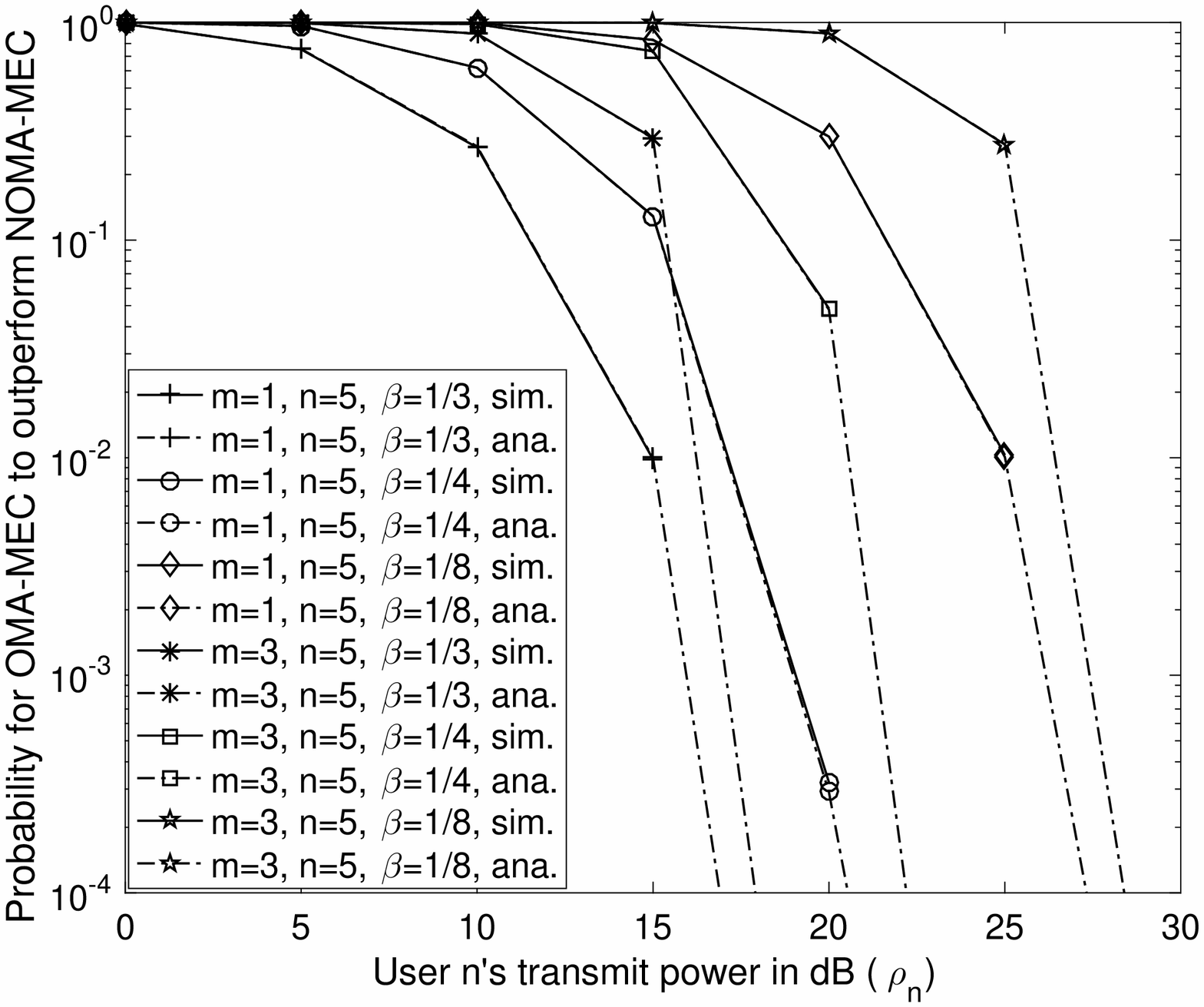}}
\subfigure[$\eta=\frac{\rho_n}{\rho_m}=5$ ]{\label{fig 32 b}\includegraphics[width=0.45\textwidth]{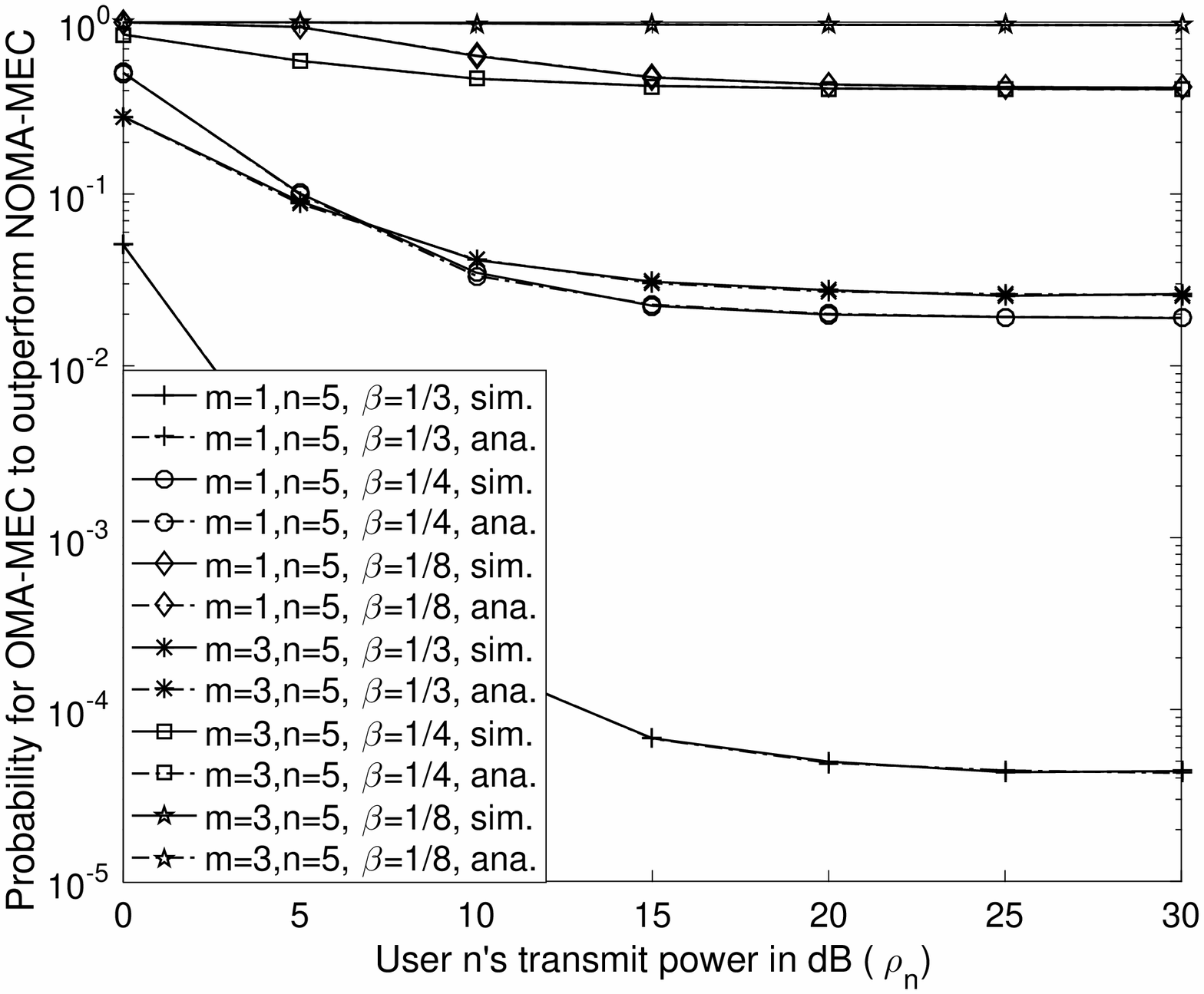}}\vspace{-1em}
\end{center}
\caption{ The probability for OMA-MEC to outperform NOMA-MEC, $\tilde{\mathrm{P}}_n$.  There are five users $M=5$.   \vspace{-1em} }\label{fig3}
\end{figure}

The impact of NOMA-MEC on the energy consumption is examined in Fig. \ref{fig3}. As can be observed from the figure, the use of NOMA can significantly reduce user $n$'s energy consumption for offloading.  In particular, first recall from \eqref{power} that the ratio between the energy consumption in the OMA and NOMA modes is $2\beta$. 
As shown in Fig. \ref{fig 31 a},  if the energy used by NOMA-MEC  is  only a quarter  of the energy used by  OMA-MEC, i.e., $\beta=\frac{1}{8}$, the probability for OMA-MEC to outperform NOMA-MEC, $\tilde{\mathrm{P}}_n$, can be reduced to  $10^{-2}$ when $\rho_n=25$ dB and $m=1$.  If the energy  of NOMA-MEC is just half of the energy  used in the OMA mode, it becomes almost sure that NOMA-MEC outperforms OMA-MEC, after $\rho_n$ is larger than $15$ dB. Recall that Remark 5 points out that for the case that both $\rho_m$ and $\rho_n$ approach infinity and $\frac{\rho_m}{\rho_n}\geq \frac{\beta^2}{1-\beta}$,   $\tilde{\mathrm{P}}_n $ approaches a non-zero constant, which is confirmed by Fig. \ref{fig 32 b}.  It is worth pointing out that user pairing has a significant impact on   energy saving of NOMA-MEC, as can be seen from the figure. For example, in Fig. \ref{fig 31 a}, when  $\rho_n=15$ dB,   the case with $m=1$ and $\beta=\frac{1}{4}$ can even realize a smaller $\tilde{\mathrm{P}}_n$  than the case with $m=3$ and $\beta=\frac{1}{3}$, i.e., scheduling user $1$ as the NOMA weak user can save more energy than the case of $m=3$.  Note that  the subfigures in Fig. \ref{fig3} also demonstrate the accuracy of the analytical results developed in Corollary~\ref{corollary1}.

 \begin{figure}[!htp]\vspace{-1em}
\begin{center} \subfigure[$\tilde{\beta}=\frac{1}{2}$ ]{\label{fig 12 a}\includegraphics[width=0.45\textwidth]{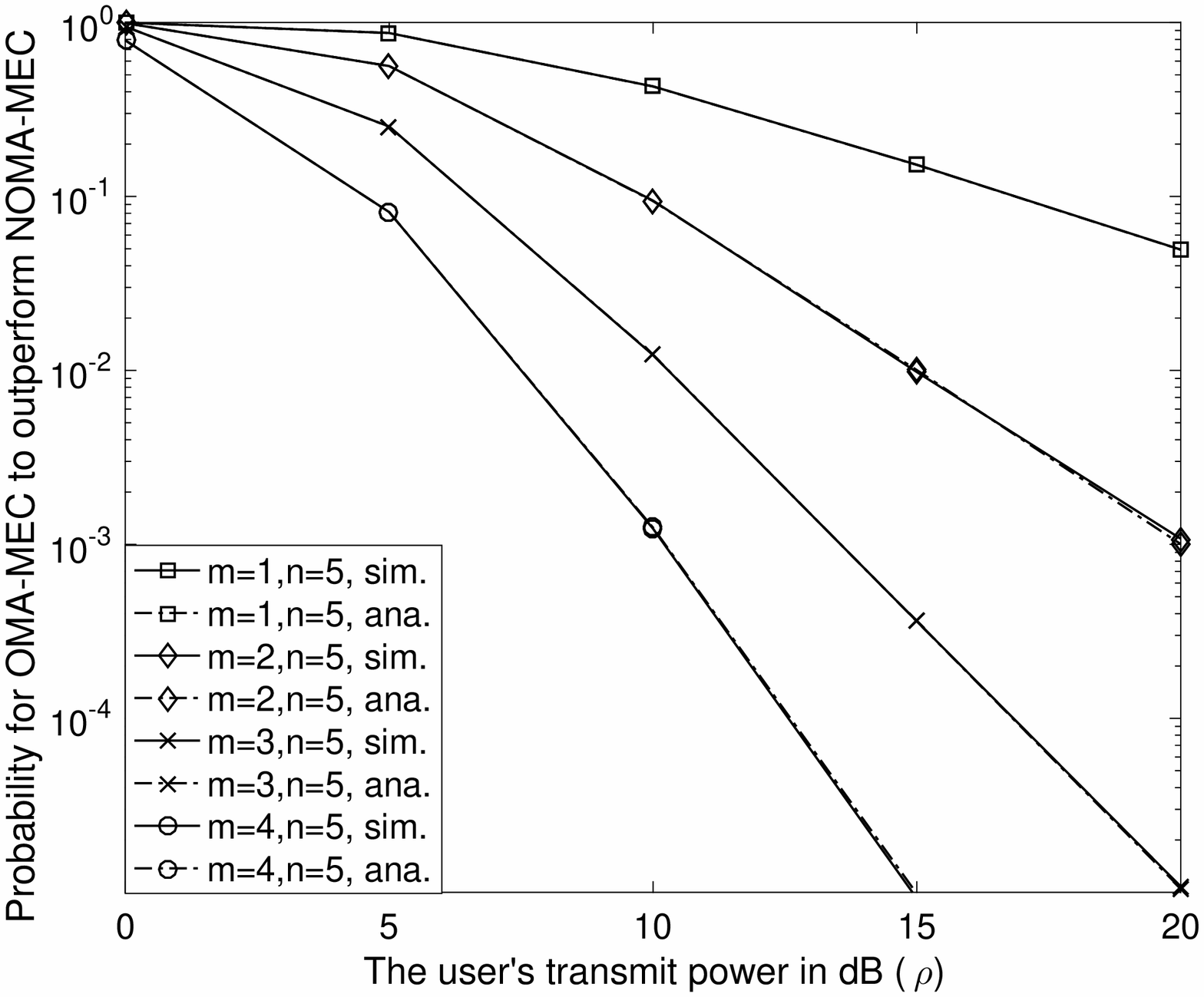}}
\subfigure[$\tilde{\beta}=\frac{1}{5}$ ]{\label{fig 12 b}\includegraphics[width=0.45\textwidth]{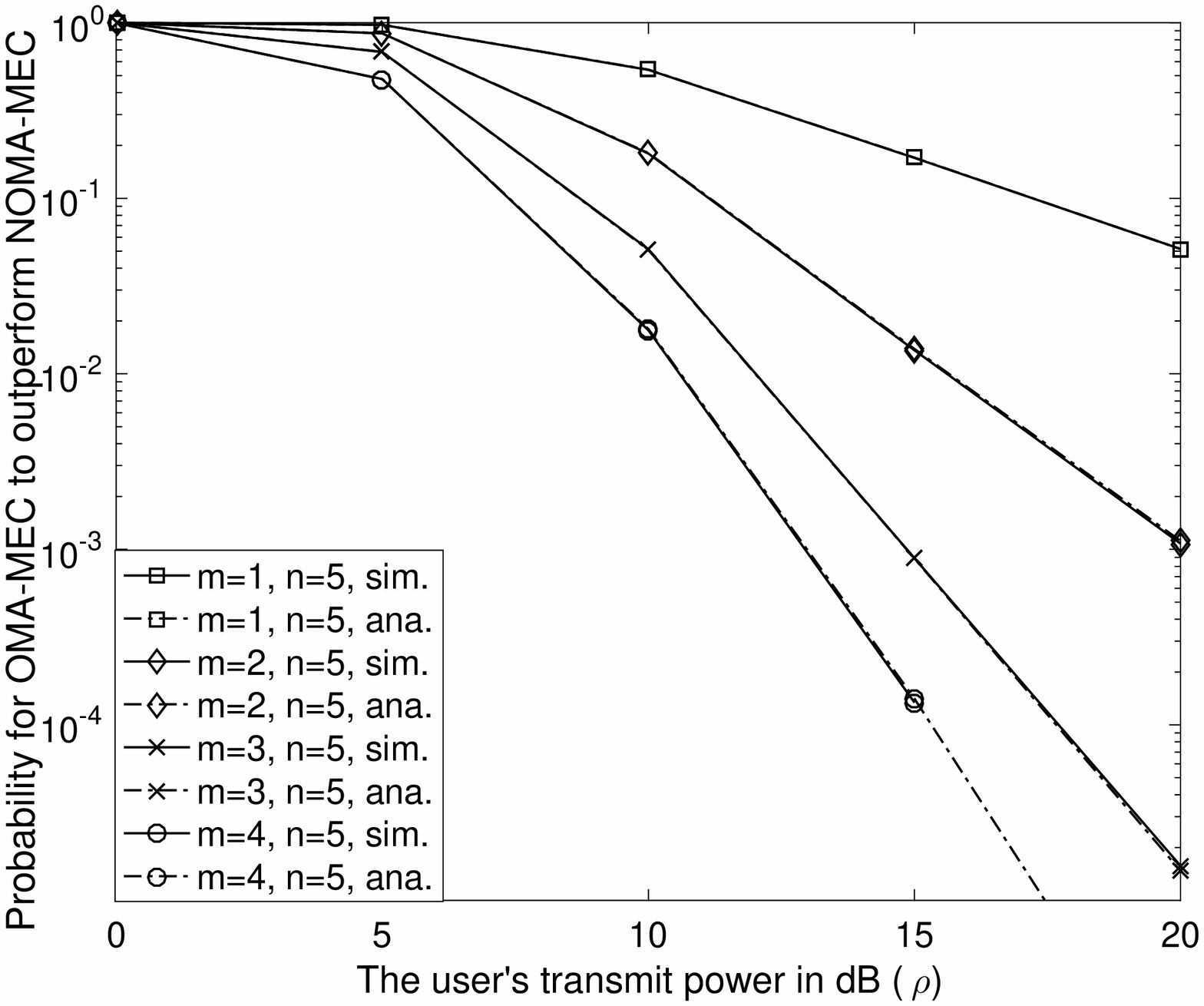}}\vspace{-1em}
\end{center}
\caption{  The probability for OMA-MEC to outperform NOMA-MEC, $\tilde{\mathrm{P}}^D_n$.  There are five users $M=5$.     }\label{fig04x}\vspace{-0.5em}
\end{figure}

In Fig. \ref{fig04x}, the impact of NOMA downlink transmission on MEC is illustrated. Because   the cognitive radio power allocation policy is used, user $m$'s performance is not affected even though user $n$ is admitted to the time slot which would be solely occupied by user $m$ in the OMA mode.  Therefore, only user $n$'s performance is evaluated. As can be observed from Fig. \ref{fig04x}, the probability for OMA-MEC to outperform NOMA-MEC approaches zero by increasing the transmit power. This phenomenon is due to the   fact that,   at high SNR, more power becomes  available to user $n$ for its offloading. One can also observe that   the slope of the probability, $\tilde{\mathrm{P}}_n^D$, is determined by the choice of $m$. This observation is consistent to Lemma \ref{lemma5}, which states that the decay rate of $\tilde{\mathrm{P}}_n^D$ is depending on $m$. 

It is interesting to point  out that the effects  of $m$ in different NOMA-MEC scenarios are different. Particular, for the MEC scenario considered in Fig. \ref{fig01x},   increasing $m$ degrades the performance of NOMA-MEC, but for the scenario considered in Fig. \ref{fig04x}, increasing $m$ improves the performance of NOMA-MEC. The reason for the two different effects is explained in the following. For the scenario considered in Fig. \ref{fig01x}, i.e., the application of NOMA uplink transmission to MEC, increasing $m$, i.e., scheduling a user with better channel conditions to act as the NOMA weak user, reduces $T_m$, the offloading time  required by user $m$. Therefore,   it is less likely for user $n$ to offload its task to the server   within the shortened time interval $T_m$. In the scenario considered in Fig. \ref{fig04x}, i.e., the application of NOMA downlink transmission to MEC, increasing $m$, i.e., scheduling a server with better channel conditions to act as the NOMA weak user, reduces the power consumed by  server $m$, and hence there is more power available to perform NOMA and offload the user's task to server $n$. 

\section{Conclusions}
In this paper, we have investigated the   coexistence of NOMA and MEC. The application of NOMA uplink transmission to MEC was considered first, where the use of NOMA ensures that multiple users can perform offloading at the same time. Then, the application of NOMA downlink transmission to MEC was studied, where one user uses NOMA to offload multiple tasks to multiple MEC servers simultaneously.  Analytical results have been developed to demonstrate that the use of NOMA can efficiently reduce  the latency and energy consumption of MEC offloading. In addition, various asymptotic studies have also been carried out to reveal the impact of the users' channel conditions and transmit powers on the performance of the combined NOMA and MEC system.

\appendices 
\section{Proof for Lemma \ref{lemmapn}}

Recall that $\mathrm{P}_n$ can be rewritten as follows:
\begin{align}
\mathrm{P}_n &= \mathrm{P}\left( R_nT_m\geq N \right)
\\\nonumber
&= \mathrm{P}\left(  \log\left(1+\frac{\rho_n|h_n|^2}{\rho_m|h_m|^2+1}\right) > \log(1+\rho_m|h_m|^2)  \right).
\end{align}
With some algebraic manipulations, the probability can be further  rewritten as follows: 
\begin{align}
\mathrm{P}_n   
&= \mathrm{P}\left(  |h_n|^2  > \frac{\rho_m}{\rho_n} |h_m|^{2} +\frac{\rho_m^2}{\rho_n}|h_m|^{4} \right). 
\end{align} 
Recall that there is an implicit constraint,  $|h_n|^2\geq |h_m|^2$, which leads to  the following inequality:
\begin{align}
 \frac{\rho_m}{\rho_n} |h_m|^{2} +\frac{\rho_m^2}{\rho_n}|h_m|^{4} \geq  |h_m|^{2}.
\end{align} 

Due to the space limits, we only focus on the case with $\rho_n\geq \rho_m$, where the results for the case with $\rho_n<\rho_m$ can be obtained similarly.  $\mathrm{P}_n   $ can be expressed as the sum of the two following probabilities:
\begin{align}\nonumber 
\mathrm{P}_n   
=& \underset{T_1}{\underbrace{\mathrm{P}\left(  |h_n|^2  > \frac{\rho_m}{\rho_n} |h_m|^{2} +\frac{\rho_m^2}{\rho_n}|h_m|^{4},   |h_m|^{2}\geq \frac{\rho_n-\rho_m}{\rho^2_m}\right)}}\\  & \label{t1}
+\underset{T_2}{\underbrace{\mathrm{P}\left(  |h_n|^2  >  |h_m|^{2},   |h_m|^{2}< \frac{\rho_n-\rho_m}{\rho^2_m}\right)}}. 
\end{align} 

By using the order statistics, the joint pdf of $|h_m|^2$ and $|h_n|^2$ can be written as follows: \cite{David03}
\begin{align}\label{joint}
f_{|h_m|^2, |h_n|^2}(x,y) = &c_{mn} e^{-x}e^{-(M-n+1)y}(1-e^{-x})^{m-1} (e^{-x}-e^{-y})^{n-1-m}, 
\end{align}
where $c_{mn}$ is defined in the lemma.

Therefore, the probability $T_1$ can be expressed as follows:
\begin{align}
T_1
=& 
c_{mn} \sum^{n-1-m}_{p=0}c_p\int^{\infty}_{\frac{\rho_n-\rho_m}{\rho^2_m}}e^{-(p+1)x}(1-e^{-x})^{m-1}   \frac{1}{M-m-p}   e^{-(M-m-p)(\frac{\rho_m}{\rho_n} x +\frac{\rho_m^2}{\rho_n}x^{2})} dx,
\end{align}
where $c_p$ is defined in the lemma.

By applying the binomial expansion, the probability $T_1$ can be expressed as follows:
\begin{align}
T_1
=& 
c_{mn} \sum^{n-1-m}_{p=0} \frac{c_p}{M-m-p}\sum^{m-1}_{l=0}\int^{\infty}_{\frac{\rho_n-\rho_m}{\rho^2_m}} c_l   e^{-\left(p+l+1+(M-m-p)\frac{\rho_m}{\rho_n}\right)x - (M-m-p) \frac{\rho_m^2}{\rho_n}x^2} dx.
\end{align}
To make the probability integral applicable, the probability can be further expressed as follows: 
\begin{align}\label{t2}
T_1
=& 
c_{mn} \sum^{n-1-m}_{p=0} \frac{c_p}{M-m-p}\sum^{m-1}_{l=0}\int^{\infty}_{\frac{\rho_n-\rho_m}{\rho^2_m}} c_l     e^{-a\left(x+\frac{b}{2a}\right)^2 +\frac{b^2}{4a}} dx.
\end{align}
By applying Eq. (3.321.2) in \cite{GRADSHTEYN} to the above equation,  the first part in \eqref{eqlemma1} is proved.

Again applying the joint pdf in \eqref{joint},  the probability $T_2$ can be obtained  as follows:
 \begin{align}\label{t3}
T_2= &1-\frac{M!}{(m-1)!(M-m)!} \sum^{m-1}_{l=0}c_l\frac{e^{-(M-m+l+1)\frac{\rho_n-\rho_m}{\rho^2_m} }}{M-m+l+1} .
\end{align}
By substituting \eqref{t2} and \eqref{t3} into \eqref{t1}, the proof for the lemma is complete. 
  
  \section{Proof for Proposition \ref{lemma2}}
  
We first rewrite the sum of the binomial coefficients as follows:
\begin{align}
&\frac{M!}{(m-1)!(M-m)!} \sum^{m-1}_{l=0}c_l \frac{
1 }{ (M-m+l+1)}\\\nonumber =&
\frac{M!}{(m-1)!(M-m)! (M-m+1)} \sum^{m-1}_{l=0}c_l \frac{
1 }{ (1+\frac{l}{M-m+1})} .
\end{align}

From \cite{8008769d}, we can have the following property for the binomial coefficients: 
\begin{align}
\sum^{t}_{i=0} (-1)^{i}{t \choose i}\frac{1}{1+ix} = \frac{t!x^t}{\prod^{t}_{i=1} (1+ix)},
\end{align}
for $x>0$ and $t$ is a non-negative integer. 

By letting $x=\frac{1}{M-m+1}$ and $t=m-1$, the above property can be rewritten as follows:
\begin{align}
&\frac{M!}{(m-1)!(M-m)!} \sum^{m-1}_{l=0}c_l \frac{
1 }{ (M-m+l+1)}\\\nonumber =&  
\frac{M!}{(m-1)!(M-m)! (M-m+1)} \frac{(m-1)!\frac{1}{(M-m+1)^{m-1}}}{\prod^{m-1}_{i=1} (1+\frac{i}{M-m+1})},
\end{align}
which  can be further simplified as follows:
\begin{align}
&\frac{M!}{(m-1)!(M-m)!} \sum^{m-1}_{l=0}c_l \frac{
1 }{ (M-m+l+1)}\\\nonumber =&  
\frac{M!}{ (M-m)! (M-m+1)} \frac{ 1}{\prod^{m-1}_{i=1} (i+M-m+1)} 
\frac{M!}{ (M-m)! (M-m+1)} \frac{ 1}{\prod^{M}_{i=M-m+1} l}=1.
\end{align}
The proof for the proposition is complete. 
\section{Proof for Lemma \ref{lemmahigh}}
Recall that the probability $\mathrm{P}_n   =T_1+T_2$. In the following, the approximation for $T_1$ is obtained first, and then  the approximation for $T_2$ is developed.  

Since both $\rho_m$ and $\rho_n$ approach infinity and $\eta$ is a constant, we can have the following approximation:
\begin{align}\label{maxapp}
\frac{\max\{0,\rho_n-\rho_m\}}{\rho^2_m}+\frac{b\sqrt{a}}{2a} \approx \frac{b\sqrt{a}}{2a},
\end{align}
which implies that whether $\eta\geq 1$ or $\eta<1$ has no impact on the high SNR approximation for $T_1$.  

First recall that the probability integral function $\Phi(x)$ has the following series representation:
\begin{align}\label{pif}
\Phi(x) &=\frac{2}{\sqrt{\pi}}e^{-x^2}\sum^{\infty}_{k=0} \frac{2^kx^{2k+1}}{(2k+1)!!} .
\end{align}

By using the approximation in \eqref{maxapp} and the series representation in \eqref{pif}, the first term of the probability $\mathrm{P}_n   $, $T_1$,  can be rewritten as follows:
\begin{align}
T_1  
\approx  & 
c_{mn} \sum^{n-1-m}_{p=0} \frac{c_p}{M-m-p}\sum^{m-1}_{l=0} c_l  e^{\frac{b^2}{4a}}   \frac{\sqrt{\pi}}{2\sqrt{a}}\left(1 - \frac{2}{\sqrt{\pi}}e^{-\left(\frac{b\sqrt{a}}{2a}\right)^2}\sum^{\infty}_{k=0} \frac{2^k\left(\frac{b\sqrt{a}}{2a}\right)^{2k+1}}{(2k+1)!!}  \right). 
\end{align}
To facilitate the asymptotic studies, the series representation of the exponential functions, $e^{\frac{b^2}{4a}} $, is used and  the probability $T_1   $ can be expressed as follows:
\begin{align}
T_1  \nonumber 
= & 
c_{mn} \sum^{n-1-m}_{p=0} \frac{c_p}{M-m-p} \left( \underset{Q_1}{\underbrace{\sum^{m-1}_{l=0} c_l \frac{\sqrt{\pi}}{2\sqrt{a}}\sum^{\infty}_{s=0} \frac{b^{2s}}{s! 4^sa^s}}} - \underset{Q_2}{\underbrace{ \sum^{m-1}_{l=0} c_l \frac{1}{\sqrt{a}}\sum^{\infty}_{k=0} \frac{2^k\left(\frac{b\sqrt{a}}{2a}\right)^{2k+1}}{(2k+1)!!}}}  \right),
\end{align}
where the two terms, $Q_1$ and $Q_2$, are evaluated separately in the following two subsections. 

\subsection{High SNR Approximation for  $Q_1$}
Recall that $b=l+\lambda$ . 
To facilitate the high SNR approximation,  the binomial expansion is applied to the term $b^{2s}$ and we have the following expression: 
\begin{align}
Q_1 =&  \sum^{\infty}_{s=0} \frac{\sqrt{\pi}}{s! 2^{2s+1}a^{s+\frac{1}{2}}} \sum^{m-1}_{l=0} c_lb^{2s} 
=  \sum^{\infty}_{s=0} \frac{\sqrt{\pi}}{s! 2^{2s+1}a^{s+\frac{1}{2}}} \sum^{m-1}_{l=0} c_l
\sum^{2s}_{q=0} {2s\choose q} \lambda^{2s-q} l^q.
\end{align}

By exchanging the order of the sums,   $Q_1$ can be rewritten as  follows:
\begin{align}  
Q_1  
=&  \sum^{\infty}_{s=0} \frac{\sqrt{\pi}\sum^{2s}_{q=0} {2s\choose q} \lambda^{2s-q}}{s! 2^{2s+1}a^{s+\frac{1}{2}}}
 \sum^{m-1}_{l=0} c_l l^q .
\end{align}
 
\subsubsection{If $m$ is an odd number}  Recall that the following properties  of the binomial coefficients: 
\begin{align}\label{q1}
 \sum^{m-1}_{l=0} c_l  l^t=0,
\end{align}
for $0\leq t\leq m-2$,
and
\begin{align}\label{q2}
 \sum^{m-1}_{l=0} c_l  l^{m-1}=(-1)^{m-1}(m-1)!.
\end{align}
 
Note that when $m$ is an odd number, $(m-1)$ is an even number. In this case, $Q_1  $ can be approximated at high SNR as follows:
\begin{align}\nonumber 
Q_1  
 \approx&  \sum^{\frac{m-1}{2}}_{s=0} \frac{\sqrt{\pi}
}{s! 2^{2s+1}a^{s+\frac{1}{2}}} \sum^{2s}_{q=0} {2s\choose q} \lambda^{2s-q}\sum^{m-1}_{l=0} c_l l^q 
\\ 
\underset{(1)}{\approx}&  \frac{\sqrt{\pi} }{\left(\frac{m-1}{2}\right)! 2^{m}a^{\frac{m}{2}}}
\sum^{m-1}_{q=0} {m-1\choose q} \lambda^{m-1-q}\sum^{m-1}_{l=0} c_l l^q 
  \label{zz2}
\underset{(2)}{=}  \frac{\sqrt{\pi}(-1)^{m-1}(m-1)!,}{\left(\frac{m-1}{2}\right)! 2^{m}a^{\frac{m}{2}}} ,
\end{align}
where  step (1)   follows from \eqref{q1}, and step (2) follows from \eqref{q1} and  \eqref{q2}.  

\subsubsection{if $m$ is an even number} In this case, $(m-1)$ becomes an odd number and $2\left\lceil\frac{m-1}{2}\right\rceil = m$, where  $\lceil\cdot\rceil$ denotes the ceiling function.   Therefore,   $Q_1  $ can be approximated at high SNR as follows:
\begin{align}\nonumber 
Q_1  
 \approx&  \sum^{\left\lceil\frac{m-1}{2}\right\rceil}_{s=0} \frac{\sqrt{\pi}
}{s! 2^{2s+1}a^{s+\frac{1}{2}}} \sum^{2s}_{q=0} {2s\choose q} \lambda^{2s-q}\sum^{m-1}_{l=0} c_l l^q 
\\\nonumber
\underset{(3)}{\approx}&   \frac{\sqrt{\pi}
}{\left\lceil\frac{m-1}{2}\right\rceil! 2^{m+1}a^{\left\lceil\frac{m-1}{2}\right\rceil+\frac{1}{2}}} \sum^{m}_{q=0} {m\choose q} \lambda^{m-q}\sum^{m-1}_{l=0} c_l l^q \\ 
\underset{(4)}{=}&   \frac{\sqrt{\pi}}{\left\lceil\frac{m-1}{2}\right\rceil! 2^{m+1}a^{\left\lceil\frac{m-1}{2}\right\rceil+\frac{1}{2}}}  \left(m \lambda \sum^{m-1}_{l=0} c_l l^{m-1}+ 
  \sum^{m-1}_{l=0} c_l l^m\right),\label{zz1}
\end{align}
where     \eqref{q1} is used to obtain steps (3) and (4). After applying \eqref{q2},  $Q_1$ can be approximated  as follows:
\begin{align}\nonumber 
Q_1  \approx &   \frac{\sqrt{\pi}}{\left(\frac{m}{2}\right)! 2^{m+1}a^{\frac{m+1}{2}}}  \left(m \lambda (-1)^{m-1}(m-1)!+ 
  \sum^{m-1}_{l=0} c_l l^m\right).
\end{align}

\subsection{High SNR Approximation for $Q_2$}
On the other hand, after applying the binomial expansion to $b^{2s}$, $Q_2$ can be expressed as follows:
\begin{align}
Q_2=& \sum^{\infty}_{k=0} \frac{ 1}{(2k+1)!!2^{k+1}a^{k+1}}\sum^{m-1}_{l=0} c_l b^{2k+1}\\\nonumber
=& \sum^{\infty}_{k=0} \frac{ 1}{(2k+1)!!2^{k+1}a^{k+1}}\sum^{2k+1}_{p=0} {2k+1 \choose p}  \lambda^{2k+1-p} \sum^{m-1}_{l=0} c_l l^p.
\end{align}
Depending on the value of $m$, $Q_2$ can be evaluated differently   in the following subsections. 
\subsubsection{if $m$ is an odd number} In this case,  $(m-2)$ is still an odd number, and $2\left\lceil\frac{m-2}{2}\right\rceil =m-1$. Therefore,  $Q_2$ can be written as follows:  
\begin{align}
Q_2 \nonumber
 \approx & \sum^{\left\lceil\frac{m-2}{2}\right\rceil }_{k=0} \frac{ 1}{(2k+1)!!2^{k+1}a^{k+1}}\sum^{2k+1}_{p=0} {2k+1 \choose p}  \lambda^{2k+1-p} \sum^{m-1}_{l=0} c_l l^p
\\\nonumber
\underset{(5)}{\approx}&    \frac{ 1}{m!!2^{\left\lceil\frac{m-2}{2}\right\rceil +1}a^{\left\lceil\frac{m-2}{2}\right\rceil +1}}\sum^{m}_{p=0} {m \choose p}  \lambda^{m-p} \sum^{m-1}_{l=0} c_l l^p
\\\label{xx1}
\underset{(6)}{=}&   \frac{ 1}{m!!2^{\frac{m+1}{2}}a^{\frac{m+1}{2}}}\left(   \sum^{m-1}_{l=0} c_l l^m +  m  \lambda  (-1)^{m-1}(m-1)!
\right),
\end{align}
where  step (5)   follows from \eqref{q1}, and step (6) follows from \eqref{q1} and  \eqref{q2}.  

\subsubsection{if $m$ is an even number} In this case,   $(m-2)$ is also an even number. Following steps similar to those in the previous subsections, $Q_2$ can be evaluated as follows:
\begin{align}
Q_2 \nonumber
 {\approx}& \sum^{\frac{m-2}{2}}_{k=0} \frac{ 1}{(2k+1)!!2^{k+1}a^{k+1}}\sum^{2k+1}_{p=0} {2k+1 \choose p} \lambda^{2k+1-p} \sum^{m-1}_{l=0} c_l l^p
\\\nonumber
 {\approx}&     \frac{ 1}{(m-1)!!2^{\frac{m-2}{2}+1}a^{\frac{m-2}{2}+1}}\sum^{m-1}_{p=0} {m-1 \choose p}  \lambda^{m-p} \sum^{m-1}_{l=0} c_l l^p
\\\label{xx2}
 {=}&      \frac{ 1}{(m-1)!!2^{\frac{m}{2}}a^{\frac{m}{2}}} (-1)^{m-1}(m-1)!.
\end{align}
Combining \eqref{zz2}, \eqref{zz1}, \eqref{xx1} and  \eqref{xx2}, the approximation for $T_1$ can be obtained.  

On the other hand, the approximation for $T_2$ can be obtained by first rewriting $T_2$ as follows: 
 \begin{align}
T_2= &1-\frac{M!}{(m-1)!(M-m)!} \sum^{m-1}_{l=0}c_l \frac{\sum^{\infty}_{k=0}(-1)^k
\frac{(M-m+l+1)^k  \frac{( \eta-1)^k}{\rho_m^k}  }{k!}
}{M-m+l+1}
\\\nonumber
=&1-\frac{M!}{(m-1)!(M-m)!} \sum^{m-1}_{l=0}c_l 
 (M-m+l+1)^{-1}   
 -\frac{M!}{(m-1)!(M-m)!} \sum^{m-1}_{l=0}c_l(-1)^k\\\nonumber &\times 
\sum^{\infty}_{k=1}\frac{
 (M-m+l+1)^{k-1}  (\eta-1)^k}{k! \rho^k_m}
  .
\end{align}
By applying Lemma \ref{lemma2} and also using the fact that $\rho_m$ approaches infinity, $T_2$ can be approximated as follows:
 \begin{align}
T_2\approx &
 -\frac{M!}{(m-1)!(M-m)!} \sum^{\infty}_{k=1}(-1)^k\frac{
  (\eta-1)^k}{k! \rho_m^k} 
\sum^{k-1}_{q=0}{k-1 \choose q}(M-m+1)^{k-1-q}\sum^{m-1}_{l=0}c_l l^q
  .
\end{align}
Again applying \eqref{q2}, $T_2$ can be approximated as follows: 
 \begin{align}
T_2\approx  &
 \frac{ M!}{ (M-m)!} \frac{
  (\eta-1)^m}{m!\rho_m^m} .
\end{align}
One can observe that the decay rate of $T_1$ is $\rho_m^{-\frac{m}{2}}$, but the decay rate of $T_2$ is $\rho_m^{-m}$. Therefore, at high SNR, $T_1$ is dominant and  the proof for the lemma is complete.

\section{Proof for Lemma \ref{lemmalow}}
Depending on whether  $\rho_n> \rho_m$ holds, the low SNR approximation for $\mathrm{P}_n   $ can be obtained differently, as shown in the following subsections. 
\subsection{For the case of $\rho_n\leq \rho_m$}
In this case, $T_2=0$, the probability $\mathrm{P}_n  $ is given by
\begin{align}
\mathrm{P}_n   
=& 
c_{mn} \sum^{n-1-m}_{p=0} \frac{c_p}{M-m-p}\sum^{m-1}_{l=0} c_l e^{\frac{b^2}{4a}}    \frac{\sqrt{\pi}}{2\sqrt{a}}\left(1 - \Phi\left(\frac{b\sqrt{a}}{2a}\right)\right).
\end{align}

Recall the probability integral function can be approximated as follows:
\begin{align}
\Phi(x) \approx 1 - \frac{e^{-z^2}}{\sqrt{\pi}z}\sum^{k_x}_{k=0} (-1)^k\frac{(2k-1)!!}{(2z^2)^k} ,
\end{align}
for $x\rightarrow \infty$, where $k_x$ decides how many terms to be kept for the approximation. At low SNR, i.e.,  $\rho_m\rightarrow 0$, $\frac{b\sqrt{a}}{2a}$ also approaches infinity, and therefore,  the probability $\mathrm{P}_n  $ can be approximated as follow: 
\begin{align}
\mathrm{P}_n   
\approx& 
c_{mn} \sum^{n-1-m}_{p=0} \frac{c_p}{M-m-p}\sum^{m-1}_{l=0} c_l      \frac{1}{ b}\sum^{k_x}_{k=0} (-1)^k\frac{(2k-1)!! 2^ka^k}{b^{2k}} \\\nonumber 
\underset{(7)}{\rightarrow}& 
c_{mn} \sum^{n-1-m}_{p=0} \frac{c_p}{M-m-p}\sum^{m-1}_{l=0}     \frac{c_l }{ M-m+l+1}  \underset{(8)}{=}1 ,
\end{align}
where step $(7)$ follows by using $k_x=0$, and  step $(8)$ follows from the following fact
\begin{align}
1
=& \int^{\infty}_0 \int^{\infty}_{x} f_{|h_m|^2, |h_n|^2}(x,y) dydx=
c_{mn} \sum^{n-1-m}_{p=0} \frac{c_p}{M-m-p}\sum^{m-1}_{l=0}     \frac{c_l }{ M-m+l+1}.
\end{align}

\subsection{For the case of $\rho_n> \rho_m$}
Recall that the probability $\mathrm{P}_n   $ is the sum of the two terms, $T_1$ and $T_2$. For the case of   $\rho_n\geq \rho_m$,  $T_1$ is given by 
\begin{align} 
T_1
=& 
c_{mn} \sum^{n-1-m}_{p=0} \frac{c_p}{M-m-p}\sum^{m-1}_{l=0} c_l e^{\frac{b^2}{4a}}   \frac{\sqrt{\pi}}{2\sqrt{a}}\left(1 - \Phi\left(\frac{\eta-1}{\rho_m}+\frac{b\sqrt{a}}{2a}\right)\right) ,
\end{align}
At low SNR, i.e.,  $\rho_m\rightarrow 0$, we have the following approximation: 
\begin{align} 
\frac{\eta-1}{\rho_m}+\frac{b\sqrt{a}}{2a} =& \frac{\eta-1}{\rho_m}+ \frac{p+l+1+\frac{M-m-p}{\eta}}{2\sqrt{\frac{\rho_m}{\eta} (M-m-p)}} \approx  \frac{\eta-1}{\rho_m} \rightarrow \infty. 
\end{align}
 Again applying the approximation of the probability integral function, in the low SNR regime, the probability can be approximated as follows:
\begin{align}\label{eqlemma1x}
T_1   \approx &  
c_{mn} \sum^{n-1-m}_{p=0} \frac{c_p}{M-m-p}\sum^{m-1}_{l=0} c_l    \frac{e^{\frac{b^2}{4\frac{\rho_m}{\eta} (M-m-p)}-\left(\frac{\eta-1}{\rho_m} \right)^2}}{2\sqrt{\frac{\eta-1}{\eta\rho_m} (M-m-p)}}   \rightarrow 0 ,
\end{align}
where we set $k_x=0$. The last approximation follows from the facts that $\rho_m^{-2}$ is more dominant than $\rho_m^{-1}$ for $\rho_m\rightarrow 0$, and $x^{\frac{1}{2}}e^{-x}\rightarrow 0$ for $x\rightarrow \infty$. It is easy to show that $T_2\rightarrow 1$ since
\begin{align}
 \frac{M!}{(m-1)!(M-m)!} \sum^{m-1}_{l=0}c_l\frac{e^{-(M-m+l+1)\frac{\eta-1}{\rho_m} }}{M-m+l+1}\rightarrow 0,
\end{align}
 for   $\rho_m\rightarrow 0$. 
Since $T_1\rightarrow 0$ and $T_2\rightarrow 1$, we have   $\mathrm{P}_n \rightarrow 1$.  

Therefore, no matter whether  $\rho_n> \rho_m$,  $\mathrm{P}_n$ always approaches $1$ and the proof for the lemma is complete.  

\section{Proof for Lemma \ref{lemmaqmn}}

With some algebraic manipulations,   the probability $\tilde{\mathrm{P}}^D_n$ can be written as follows:  
\begin{align}\label{xd}
\tilde{\mathrm{P}}^D_n= &\mathrm{P} \left( |g_m|^2\leq \frac{\epsilon}{\rho} \right)+ \mathrm{Q}_{mn} ,
\end{align}
where 
\begin{align}
\mathrm{Q}_{mn} \triangleq &\mathrm{P} \left( |g_m|^2> \frac{\epsilon}{\rho}, T\log\left(1+\alpha_n^2\rho|g_n|^2\right)+T\log(1+ \tilde{\beta}\rho |g_n|^2) \leq T\log(1+\rho |g_n|^2) \right). 
\end{align}
Note that in \eqref{xd}, we use the fact that when $|g_m|^2\leq \frac{\epsilon}{\rho}$, MEC server $n$ cannot be admitted during the first time slot and hence its rate in NOMA is always smaller than that of OMA due to $\beta<1$.

By using the marginal pdf of $|g_m|^2$, $\mathrm{P} \left( |g_m|^2\leq \frac{\epsilon}{\rho} \right) $ can be   calculated as follows:
\begin{align} \label{xd2}
 \mathrm{P} \left( |g_m|^2\leq \frac{\epsilon}{\rho} \right) = 1-\frac{K!\sum^{m-1}_{l=0}c_l\frac{e^{-(K-m+l+1)\frac{\epsilon}{\rho} }}{K-m+l+1}}{(m-1)!(K-m)!}  .
\end{align}

The second term in \eqref{xd}, denoted by $\mathrm{Q}_{mn}$, can be rewritten  as follows:
\begin{align}\nonumber
\mathrm{Q}_{mn}  =  &\mathrm{P} \left( |g_m|^2> \frac{\epsilon}{\rho}, \left(1+ \frac{\rho |g_m|^2 -\epsilon}{\rho |g_m|^2(1+\epsilon)}\rho|g_n|^2\right) (1+ \tilde{\beta}\rho |g_n|^2) \leq  (1+\rho |g_n|^2) \right),
\end{align}
where the equation follows by using the    CR power allocation coefficient in \eqref{cr}. 
With some algebraic manipulations, the term $\mathrm{Q}_{mn} $ can be expressed as follows: 
\begin{align}\nonumber
\mathrm{Q}_{mn} =  &\mathrm{P} \left( |g_m|^2> \frac{\epsilon}{\rho},   |g_n|^2\leq \frac{\rho |g_m|^2[(1-\tilde{\beta})(1+\epsilon)-1]+\epsilon}{\rho\tilde{\beta}(\rho|g_m|^2-\epsilon)} \right). 
\end{align}

Due to the channel ordering assumption made in \eqref{order2}, we have the following inequality 
\begin{align} 
  |g_m|^2\leq  |g_n|^2\leq \frac{\rho |g_m|^2[(1-\tilde{\beta})(1+\epsilon)-1]+\epsilon}{\rho\tilde{\beta}(\rho|g_m|^2-\epsilon)} ,
\end{align}
which leads to the following constraint on $ |g_m|^2$:
\begin{align} \label{cc}
   |g_m|^2\leq    \frac{\rho |g_m|^2[(1-\tilde{\beta})(1+\epsilon)-1]+\epsilon}{\rho\tilde{\beta}(\rho|g_m|^2-\epsilon)} .
\end{align}
With some algebraic manipulations, one can find that $-\frac{1}{\rho}$ and $\frac{\epsilon}{\tilde {\beta}\rho}$ are the two roots of the following quadratic form:
\begin{align}  
\rho\tilde{\beta}(\rho x-\epsilon)x-  \left( \rho x[(1-\tilde{\beta})(1+\epsilon)-1]+\epsilon\right) =0.
\end{align}
 
Therefore,  the constraint in \eqref{cc} can be surprisingly written in a very simplified form as follows:
\begin{align} \label{ccx}
    |g_m|^2 \leq \frac{\epsilon}{\tilde {\beta}\rho} .
\end{align}
Note that $\tilde{\beta}\leq 1$, which means $ \frac{\epsilon}{\tilde {\beta}\rho}\geq
 \frac{\epsilon}{ \rho}$. 
Therefore $\mathrm{Q}_{mn}$ can be further expressed as follows:
\begin{align}\nonumber
\mathrm{Q}_{mn} =  &\mathrm{P} \left(  \frac{\epsilon}{\rho}<|g_m|^2\leq  \frac{\epsilon}{\tilde{\beta}\rho},  |g_n|^2\leq \frac{\rho |g_m|^2[(1-\tilde{\beta})(1+\epsilon)-1]+\epsilon}{\rho\tilde{\beta}(\rho|g_m|^2-\epsilon)} \right) ,
\end{align}
where we use the fact that
\[
\mathrm{P} \left( |g_m|^2>  \frac{\epsilon}{\tilde {\beta}\rho}, |g_n|^2\leq \frac{\rho |g_m|^2[(1-\tilde{\beta})(1+\epsilon)-1]+\epsilon}{\rho\tilde{\beta}(\rho|g_m|^2-\epsilon)} \right)=0.
\]
After applying the joint pdf in \eqref{joint}, the term $\mathrm{Q}_{mn} $ can be written as follows: 
\begin{align}\nonumber
\mathrm{Q}_{mn} =  & 
c_{mn} \sum^{n-1-m}_{p=0}c_p\int^{ \frac{\epsilon}{\tilde{\beta}\rho}}_{\frac{\epsilon}{\rho}}e^{-(p+1)x} (1-e^{-x})^{m-1} \int^{\frac{\rho x[(1-\tilde{\beta})(1+\epsilon)-1]+\epsilon}{\rho\tilde{\beta}(\rho x-\epsilon)} }_{x}e^{-(K-m-p)y} dydx
\\\nonumber
  =& 
c_{mn} \sum^{n-1-m}_{p=0}c_p\int^{ \frac{\epsilon}{\tilde{\beta}\rho}}_{\frac{\epsilon}{\rho}}e^{-(p+1)x} (1-e^{-x})^{m-1}    \frac{e^{-(K-m-p)x}-e^{-(K-m-p)\frac{\rho x[(1-\tilde{\beta})(1+\epsilon)-1]+\epsilon}{\rho\tilde{\beta}(\rho x-\epsilon)} }}{K-m-p} dx.
\end{align}
After applying the Chebyshev-Gauss approximation, $\mathrm{Q}_{mn}$ can be approximated as follows: 
\begin{align} \label{qmn}
\mathrm{Q}_{mn} \approx  & c_{mn} \sum^{n-1-m}_{p=0}c_p\sum^{N}_{i=1} \frac{\pi}{N} \left(\frac{\epsilon}{2\tilde{\beta}\rho}- \frac{\epsilon}{2\rho}\right)\\\nonumber &\times f\left(\left(\frac{\epsilon}{2\tilde{\beta}\rho}+ \frac{\epsilon}{2\rho}\right) +\left(\frac{\epsilon}{2\tilde{\beta}\rho}- \frac{\epsilon}{2\rho}\right)\theta_i \right) \sqrt{1-\theta_i^2}. 
\end{align}
By substituting \eqref{xd2} and \eqref{qmn} into \eqref{xd}, the proof for the lemma is complete. 

\section{Proof for Lemma \ref{lemma5}}

 Recall that $\tilde{\mathrm{P}}^D_n$ is the sum of two terms, i.e., $
\tilde{\mathrm{P}}^D_n= \mathrm{P} \left( |g_m|^2\leq \frac{\epsilon}{\rho} \right)+ \mathrm{Q}_{mn}$. By using  the proof for Lemma \ref{lemmahigh}, the first part of $\tilde{\mathrm{P}}^D_n$ can be approximated as follows:
 \begin{align}
 \mathrm{P} \left( |g_m|^2\leq \frac{\epsilon}{\rho} \right) =&1-\frac{K!\sum^{m-1}_{l=0}c_l\frac{e^{-(K-m+l+1)\frac{\epsilon}{\rho} }}{K-m+l+1}}{(m-1)!(K-m)!} \\\nonumber \approx &
 \frac{ K!}{ (K-m)!} \frac{
  2^{2m}(1-\beta)^m}{m!\beta^{2m}\rho^m} \doteq \rho^{-m}. 
\end{align}

In the following, the approximation of $Q_{mn}$ will be focused. 
According to the mean value theorem for integrals, $\tilde{\mathrm{P}}^D_n$ can be evaluated as follows: 
\begin{align}\nonumber
\mathrm{Q}_{mn} =&  c_{mn} \sum^{n-1-m}_{p=0}c_pe^{-(p+1) \frac{\epsilon}{\xi\rho}} (1-e^{- \frac{\epsilon}{\xi\rho}})^{m-1}   \frac{e^{-\frac{\xi_1}{\rho}}-e^{-\frac{\xi_2}{\rho}}}{K-m-p}, 
\end{align}
for a parameter $\xi$ satisfying 
\begin{align}
  \frac{\epsilon}{\tilde{\beta}\rho} \leq \frac{\epsilon}{\xi\rho} \leq \frac{\epsilon}{\rho},
\end{align}
where $\tilde{\beta}\leq \xi \leq 1$.

To simplify the notation, we define $\xi_1=(K-m-p) \frac{\epsilon}{\xi }$ and $\xi_2=(K-m-p)\frac{   [(1-\tilde{\beta})(1+\epsilon)-1]+\xi}{ \tilde{\beta}(  1-\xi)} $. Note that both the parameters, $\xi_1$ and $\xi_2$, are not functions of the SNR. Therefore, $\tilde{\mathrm{P}}^D_n$ can be approximated as follows:
\begin{align}\nonumber
\mathrm{Q}_{mn} \approx&  c_{mn} \sum^{n-1-m}_{p=0}c_p  \sum^{m-1}_{l=0}c_l e^{-(p+1) \frac{\epsilon}{\xi\rho}} e^{- \frac{l\epsilon}{\xi\rho}}   \frac{e^{-\frac{\xi_1}{\rho}}-e^{-\frac{\xi_2}{\rho}}}{K-m-p}  \\\nonumber
 =&    c_{mn} \sum^{n-1-m}_{p=0}c_p e^{-(p+1) \frac{\epsilon}{\xi\rho}}    \frac{e^{-\frac{\xi_1}{\rho}}-e^{-\frac{\xi_2}{\rho}}}{K-m-p} \sum^{m-1}_{l=0}c_l e^{- \frac{l\epsilon}{\xi\rho}}.
\end{align}
 By applying the series representation for the exponential function, $\tilde{\mathrm{P}}^D_n$ can be approximated as follows:
\begin{align}\nonumber
\mathrm{Q}_{mn} \approx&      c_{mn} \sum^{n-1-m}_{p=0}c_p e^{-(p+1) \frac{\epsilon}{\xi\rho}}    \frac{e^{-\frac{\xi_1}{\rho}}-e^{-\frac{\xi_2}{\rho}}}{K-m-p} \sum^{m-1}_{l=0}c_l  \sum^{\infty}_{k=0} \frac{l^k\epsilon^k}{\xi^k\rho^k k!}
\\\nonumber = &   c_{mn} \sum^{n-1-m}_{p=0}c_p e^{-(p+1) \frac{\epsilon}{\xi\rho}}    \frac{e^{-\frac{\xi_1}{\rho}}-e^{-\frac{\xi_2}{\rho}}}{K-m-p}\sum^{\infty}_{k=0}   \frac{\epsilon^k\sum^{m-1}_{l=0}c_l l^k}{\xi^k\rho^k k!}.
\end{align}
Now applying the properties in \eqref{q1} and \eqref{q2}, we have the following approximation
\begin{align}\nonumber
\mathrm{Q}_{mn} \approx&    c_{mn} \sum^{n-1-m}_{p=0}c_p e^{-(p+1) \frac{\epsilon}{\xi\rho}}    \frac{e^{-\frac{\xi_1}{\rho}}-e^{-\frac{\xi_2}{\rho}}}{K-m-p}    \\\nonumber &\times \frac{\epsilon^{m-1}(-1)^{m-1}(m-1)! }{\xi^{m-1}\rho^{m-1} (m-1)!}.
\end{align}
In order to remove the sum with respect to $p$, we first rewrite $\tilde{\mathrm{P}}^D_n$ as follows:
\begin{align}\nonumber
\mathrm{Q}_{mn} \approx&    c_{mn} \frac{\epsilon^{m-1}(-1)^{m-1}(m-1)! }{\xi^{m-1}\rho^{m-1} (m-1)!}     \sum^{n-1-m}_{p=0}c_p e^{-(p+1) \frac{\epsilon}{\xi\rho}}    \frac{e^{-\frac{\xi_1}{\rho}}-e^{-\frac{\xi_2}{\rho}}}{K-m-p} \\ 
 \approx&   \frac{1}{\rho^{m}}  \frac{c_{mn}\epsilon^{m-1}(-1)^{m-1}(m-1)! }{\xi^{m-1} (m-1)!}    \sum^{n-1-m}_{p=0}c_p      \frac{ \xi_2 - \xi_1  }{K-m-p} . 
\end{align}
In order to make  Lemma \ref{lemma2} applicable, the sum in $\tilde{\mathrm{P}}^D_n$ can be first rewritten as follows: 
\begin{align}\nonumber
\mathrm{Q}_{mn} \approx&      \frac{1}{\rho^{m}}  \frac{c_{mn}\epsilon^{m-1}(-1)^{m-1}(m-1)! }{\xi^{m-1} (m-1)!}   \\\nonumber &\times(\xi_2 - \xi_1 )\sum^{n-1-m}_{t=0}{n-1-m\choose t}      \frac{  (-1)^t   }{K-n+1+t} 
\\\nonumber
\underset{(9)}{=}&    \frac{1}{\rho^{m}}  \frac{c_{mn}\epsilon^{m-1}(-1)^{m-1}(m-1)! }{\xi^{m-1} (m-1)!}   \\  &\times(\xi_2 - \xi_1 )\frac{(n-m-1)!(K-n)!}{(K-m)!} \doteq \rho^{-m},
\end{align}
where step (9) follows by using Lemma \ref{lemma2}. 
Since both the terms in \eqref{xccc} have the same order of $m$, the proof for the lemma is complete.
  \bibliographystyle{IEEEtran}
\bibliography{IEEEfull,trasfer}

  \end{document}